%% file: arxiv_version.tex
\documentclass[lettersize,journal]{IEEEtran}
\usepackage{xcolor}
\usepackage{comment}
\usepackage{bm}
\usepackage{mdframed}
\usepackage{acronym}
\usepackage{fontenc}
\usepackage{graphicx}
\usepackage{epstopdf}
\usepackage{amsmath}
\usepackage{algorithm}      
\usepackage{paralist}     
\usepackage{algpseudocode}  
\usepackage{amsmath}         
\usepackage{amssymb}
\usepackage{amsthm}
\usepackage{amsfonts}
\usepackage{mathtools}
\usepackage{physics}
\usepackage{float}       
\usepackage{dsfont,bbm}
 \usepackage{booktabs}
\usepackage{url}
\usepackage{color}
\usepackage{mathtools}
\usepackage{paralist}
\usepackage{enumitem}
\usepackage{fancyhdr}
\usepackage{booktabs}
\usepackage[hidelinks]{hyperref}
\usepackage{cleveref}
\usepackage{tikz}

\usepackage{longtable}   
\usepackage{orcidlink}
\usetikzlibrary{shapes, arrows.meta, positioning, calc, fit, backgrounds, shadows}
\usepackage{threeparttable} 

\usetikzlibrary{shapes.symbols, arrows.meta, positioning, calc, fit, backgrounds}
\usepackage{tablefootnote}
\usepackage{microtype}
\usepackage{fancyhdr} 
\graphicspath{{graphics/}}

\newmdenv[]{reply}

\theoremstyle{plain}      % Italicized text
\newtheorem{theorem}{Theorem}
\newtheorem{proposition}{Proposition}
\newtheorem{lemma}{Lemma}
\newtheorem{corollary}{Corollary}
\theoremstyle{definition} % Upright text
\newtheorem{definition}{Definition}
\newtheorem{assumption}{Assumption}
\theoremstyle{remark}     
\newtheorem{remark}{Remark}

\newcommand{\ORCID}[1]{\,\orcidlink{#1}} % thin space + icon macro
\begin{document}

\title{Pricing for Information Revelation in Demand Response:\\ A Strategic Communication Approach}

\author{%
Hassan~Mohamad\ORCID{0009-0009-8838-243X},
Chao~Zhang\ORCID{0000-0002-1469-9911},
Samson~Lasaulce\ORCID{0000-0001-9837-9538},~
Olivier~Beaude\ORCID{0000-0003-1719-8749},
Vineeth~Varma\ORCID{0000-0001-8762-2790},
Mounir~Ghogho\ORCID{0000-0002-0055-7867},
and~Vincent~Poor\ORCID{0000-0002-2062-131X},
\thanks{Hassan Mohamad, Vineeth Varma, and Samson Lasaulce are with the CRAN Laboratory, Universit\'e de Lorraine, CNRS, F-54000 Nancy, France.}
\thanks{Chao Zhang is with the Central South University, Changsha, China.}
\thanks{Olivier Beaude is with EDF Lab Paris-Saclay, OSIRIS department.}
\thanks{Mounir Ghogho is with the College of Computing, University Mohammed VI
Polytechnic, Morocco.}
\thanks{Vincent Poor is with the Department of Electrical and Computer Engineering,
Princeton University, NJ 08544, USA .}
}

\markboth{Journal of \LaTeX\ Class Files,~Vol.~xx, No.~xx, DEC. 2025}%
{Mohamad \MakeLowercase{\textit{et al.}}:Pricing for Information Revelation in Demand Response: A Strategic Communication Approach }
\maketitle
\input{sections_combined}

{\small
\bibliographystyle{IEEEtran}
\bibliography{a_refs_for_tsg2}
}

\end{document}

%% file: sections_combined.tex
\begin{abstract}
Many smart grid frameworks, such as demand response programs, require accurate information about consumers' parameters (e.g., flexibility) at the aggregator side to optimize grid operations. Existing works typically rely on perfect information assumptions or complex incentive-compatible mechanisms; however, in voluntary settings, and in the presence of strategic consumers, possibly implemented by automated intelligent agents, private parameters may be misreported due to strategic incentives. We analyze this communication setting using cheap-talk game theory, delivering four key insights. First, the nontrivial scenario of multiple strategic transmitters (consumers) turns out to be tractable for the case study of interest: we prove that complex strategic interactions among multiple consumers decouple into independent subgames. Second, we demonstrate that a pre-announced retail price can be exploited as a design lever to control the information revealed by the consumers and therefore the overall system efficiency. Third, we derive a closed-form expression for the optimal uniform price that maximizes information revelation. Finally, we characterize the equilibrium structure to identify when communication is informative. Simulations show that a properly designed price for the communication scheme can recover up to 95\% of the ideal system utility (i.e., under perfect information reporting), whereas a price-unaware choice leads to significant losses in social welfare.
\end{abstract}
\begin{IEEEkeywords}
    Strategic information transmission, demand response, signaling games, cheap talk, game theory
\end{IEEEkeywords}
\section{Introduction}
\label{sec:introduction}
\IEEEPARstart{T}{he} transformation of modern power systems through renewable energy integration and transportation electrification creates a fundamental coordination challenge for grid operators \cite{zhu2025revisiting, beaude2016reducing}. Real-time balancing of supply and demand now requires flexible modulation of consumption across millions of distributed resources (thermostatic loads, battery storage, electric vehicle charging), each privately owned with heterogeneous preferences and capabilities. Central to addressing this challenge are \emph{aggregators}, entities that coordinate distributed consumers to deliver collective grid services. To optimize system-wide outcomes, aggregators must know each consumer's flexibility: their willingness to curtail consumption, shift loads, or modify usage patterns. However, consumers are self-interested strategic agents whose private valuations are not directly observable. This creates a fundamental information asymmetry: the system operator needs consumer flexibility data to compute efficient allocations, yet consumers have incentives to misreport this data to secure more favorable treatment. In modern grids, this reporting process could be implemented by artificial intelligence agents, which optimize communication under this objective misalignment.

This tension between \emph{information asymmetry} and \emph{strategic misalignment} affects all Demand Response (DR) programs \cite{parag2016electricity, kiedanski2020misalignments}. The resulting question is not merely one of technical coordination, but of mechanism design: how can we elicit truthful information from strategic participants while maintaining practical implementability? Existing DR mechanisms can be evaluated along three dimensions: (i) implementation complexity (communication overhead or computational requirements), (ii) incentive properties (truthfulness guarantees, robustness to gaming), and (iii) autonomy preservation (degree of consumer control, privacy protection).

{Direct Load Control (DLC)} programs \cite{saad2016toward} achieve perfect information revelation by granting utilities direct curtailment authority. However, this comes at the cost of complete loss of consumer autonomy, which is an essential barrier to widespread adoption in residential and commercial settings where control over consumption is highly valued. Incentive-compatible mechanisms such as Vickrey-Clarke-Groves (VCG) auctions \cite{samadi2012advanced}, self-reported baseline mechanisms \cite{muthirayan2020mechanisma}, and frameworks based on contract theory \cite{parvizi2025contract-based} theoretically guarantee truthful revelation through well-designed monetary transfers. Recent variants have addressed privacy and computational concerns \cite{tsaousoglou2020truthful, satchidanandan2023atwo-stage}. However, these mechanisms impose high implementation complexity: they require iterative bidding and monetary settlements that are often infeasible in residential or regulatory contexts. Conversely, {price-based programs} such as Time-of-Use tariffs \cite{guo2024ahigh-efficiency}, billing mechanisms \cite{jacquot2019analysis}, Critical Peak Pricing \cite{chen2021approaching}, real-time pricing algorithms \cite{ding2019distributed}, and energy sharing mechanisms \cite{chen2020an, chen2021an, yang2025robust} achieve simplicity and preserve autonomy but model consumers as passive price-takers. Moreover, such prices are typically fixed ex ante based on long-term statistical averages, leading to efficiency losses when the real-time system state deviates from the norm. This ignores the strategic behavior inherent in two-way communication settings where consumers can observe system state and adapt their reporting behavior.

A gap exists between these extremes: many emerging DR programs rely on voluntary, non-binding communication via smart meters. Consumers send signals about their flexibility through costless messages that are unverifiable by the system operator, yet these signals inform dispatch decisions. From a game-theoretic perspective, such costless and unverifiable signaling constitutes a \emph{cheap talk} game, as introduced in the seminal work of Crawford and Sobel \cite{crawford1982strategic}. This framework models strategic communication under an objective misalignment, where messages carry no inherent cost or commitment, yet can convey information at equilibrium. While smart meters enable two-way information exchange, they also raise privacy concerns \cite{sankar2013smart} that voluntary mechanisms address by limiting the granularity of transmitted information. Yet if the system is properly designed, such voluntary messages can convey valuable information. While \cite{larrousse2014crawford} first identified the relevance of cheap talk to smart grids, their analysis was restricted to single sender scenarios (a single consumer signaling to the aggregator). Extending to multi-consumer settings poses three fundamental challenges. First, \emph{multi-sender strategic coupling}: In DR programs with limited resources or convex costs, one consumer's allocation inevitably depends on the aggregate demand \cite{samadi2012advanced}. This creates a strategic dependency where a consumer's optimal message relies on their beliefs about others' strategies, transforming the interaction into a high-dimensional fixed-point problem that is generally intractable \cite{mcgee2013cheap,ambrus2008multi-sender,velicheti2025value}. Second, \emph{state-dependent bias}: The widely-adopted quadratic utility framework \cite{samadi2012advanced} (quadratic consumer utility and generation costs) gives rise to misreporting incentives that vary dynamically with consumer urgency (state-dependent bias), unlike classical cheap talk models where strategic bias is assumed constant \cite{crawford1982strategic,saritas2017quadratic}. Third, \emph{heterogeneity}: Real consumer populations exhibit diverse valuation distributions and saturation characteristics, further complicating the problem. These complexities have left fundamental questions unanswered: What efficiency can voluntary communication achieve compared to incentive-compatible designs? How does a pre-announced price affect information quality? Can a single mechanism optimally serve a heterogeneous population? What are the fundamental limits on information transmission in DR programs? This paper addresses these challenges through the following contributions:\\
1) \textbf{Tractable Analysis for Large-Scale Strategic DR.} We prove that the multi-sender game {(consumers signaling to the aggregator)}, typically intractable due to strategic coupling, decomposes into independent single sender subgames. This result overcomes the intractability of interdependent beliefs, enabling the tractable analysis of heterogeneous populations.\\
2) \textbf{Price as a Mechanism Design Lever.} We demonstrate that an ex-ante set price acts as a mechanism design lever that controls the ``strategic bias'' of consumers. We identify specific price regimes that determine whether communication is informative or collapses into a non-informative one.\\
3) \textbf{Uniform Optimal Pricing for Heterogeneous Consumers.} We derive a closed-form, system-wide price that minimizes strategic bias for all consumers, showing that personalized pricing is not necessary to maximize information revelation.\\
4) \textbf{Explicit Equilibrium Characterization.} We explicitly characterize the equilibrium structure. Simulations demonstrate that a properly designed DR communication scheme achieves 95\% of the ideal social optimum.

\textbf{Notation}. Scalars are in plain font ($x_n, p$), vectors in bold ($\bm{x}, \bm{m}$), sets in calligraphic font ($\mathcal{K}$), and spaces in blackboard bold ($\mathbb{M}$).
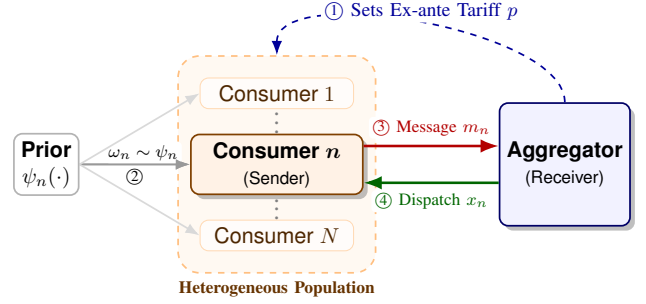
\begin{figure}[!t]
    \centering
    \resizebox{0.95\linewidth}{!}{
        \begin{tikzpicture}[
            font=\sffamily,
            >=Latex,
            node distance=0.45cm and 2.4cm,
            % --- Styles ---
            myBlue/.style={draw=blue!40!black, fill=blue!5, text=black},
            myOrange/.style={draw=orange!50!black, fill=orange!10, text=black},
            sourceNode/.style = {rectangle, draw=gray!60, fill=white, thick, rounded corners, minimum width=1.0cm, align=center},
            edgeLabel/.style = {fill=white, inner sep=1.5pt, font=\footnotesize, text opacity=1, rounded corners=2pt},
            stepNum/.style = {circle, draw=black, fill=white, inner sep=0.5pt, font=\scriptsize\bfseries, text=black}
        ]

            % --- Nodes ---
            % Central Consumer n -- (Sender)
            \node[rectangle, rounded corners, thick, align=center, minimum height=0.6cm, myOrange, minimum width=2.8cm, drop shadow] (cons_n) {\textbf{Consumer} $\bm{n}$\\ \footnotesize(Sender)};
            
            % Dummy node for spacing
            \node[below=0.1cm of cons_n] (knowledge) {};
            
            % Other Consumers (faded)
            \node[rectangle, rounded corners, draw=orange!30, fill=orange!2, text=orange!30!black, align=center, minimum height=0.5cm, above=0.4cm of cons_n, minimum width=2.5cm] (cons_1) {Consumer $1$};
            \node[rectangle, rounded corners, draw=orange!30, fill=orange!2, text=orange!30!black, align=center, minimum height=0.5cm, below=0.4cm of cons_n, minimum width=2.5cm] (cons_N) {Consumer $N$};

            % Dots
            \path (cons_1) -- (cons_n) node[pos=0.3, font=\large, gray] {$\vdots$};
            \path (cons_n) -- (cons_N) node[pos=0.3, font=\large, gray] {$\vdots$};

            % ---  Prior Distribution Source ---
            \node[sourceNode, left=1.8cm of cons_n] (nat) {\textbf{Prior}\\$\psi_n(\cdot)$};
            
            % Aggregator
            \node[rectangle, rounded corners, thick, align=center, minimum height=0.3cm, myBlue, right=2.2cm of cons_n, minimum height=2.0cm, minimum width=2.0cm, drop shadow] (agg) {\textbf{Aggregator}\\ \footnotesize(Receiver)};

            % Population Box
            \begin{pgfonlayer}{background}
                \node[fit=(cons_1)(cons_N)(knowledge), draw=orange!50, dashed, thick, rounded corners=10pt, fill=orange!5!white, inner sep=10pt] (pop_box) {};
            \end{pgfonlayer}
            \node[anchor=north, font=\footnotesize\bfseries, text=orange!40!black] at (pop_box.south) {Heterogeneous Population};

            % --- ARROWS ---

            % 1. Price (From Aggregator to Population) - The "Design" phase
            \draw[->, dashed, thick, blue!60!black] (agg.north) 
                .. controls +(up:1.2cm) and +(up:1.2cm) .. (pop_box.north) 
                node[midway, above, edgeLabel, text=blue!60!black] {\small \textcircled{\scriptsize 1} Sets Ex-ante Tariff $p$};

            % 2. Stochastic Draw (To Consumers) -- 
            \draw[->, thick, gray!80] (nat.east) -- (cons_n.west) 
                node[pos=0.58, above, edgeLabel, text=black] {$\omega_n \sim \psi_n$};

            \draw[->, thick, gray!80] (nat.east) -- (cons_n.west) 
                node[midway, below=0.03, edgeLabel, text=black] {\textcircled{\scriptsize 2}};    
                
            % Faded arrows for others
            \draw[->, thick, gray!30] (nat.east) -- (cons_1.west);
            \draw[->, thick, gray!30] (nat.east) -- (cons_N.west);

            % 3. Message (Consumer to Aggregator)
            \draw[->, very thick, red!70!black] ([yshift=3mm]cons_n.east) -- ([yshift=3mm]agg.west) 
                node[midway, above=0.1, edgeLabel, text=red!70!black] {\textcircled{\scriptsize 3} Message $m_n$};

            % 4. Allocation (Aggregator to Consumer)
            \draw[<-, very thick, green!40!black] ([yshift=-3mm]cons_n.east) -- ([yshift=-3mm]agg.west) 
                node[midway, below=0.1, edgeLabel, text=green!40!black] {\textcircled{\scriptsize 4} Dispatch $x_n$};

        \end{tikzpicture}
    }
\caption{Framework Overview. The interaction proceeds in four steps: (1) The Aggregator sets an ex-ante tariff $p$. (2) Private consumer types $\omega_n$ are drawn from priors $\psi_n$. (3) Consumers strategically reveal information via messages $m_n$ based on their incentive alignment. (4) The Aggregator determines the optimal dispatch $x_n$.
}
\label{fig:framework_overview}
\end{figure}

\section{The Strategic Communication Game}
\label{sec:n_consumer_game}
\subsection{System Model}
\label{subsec:game_formulation}

We analyze a single time slot\footnote{Following standard simplifications in DR literature (see, e.g., \cite{muthirayan2020mechanisma,tushar2014prioritizing,samadi2010optimal}), we focus on a single time slot to isolate strategic communication incentives.} with $N$ consumers indexed by $\mathcal{K}\triangleq\{1,\dots,N\}$ and one aggregator.
Each consumer is equipped with a smart meter enabling two-way information exchange with the aggregator.
The utility model follows the widely adopted quadratic framework (see, e.g., \cite{guo2024ahigh-efficiency,chen2021approaching,ding2019distributed,samadi2012advanced,samadi2010optimal}). Consumer $n$ has a private parameter (type) $\omega_n$ representing his private valuation. We model the uncertain type of Consumer $n$ as a random variable drawn from a continuous and strictly positive prior distribution $\psi_n(\omega_n)$ with bounded support $\Omega_n \triangleq [\underline{\omega}_n, \overline{\omega}_n] \subset \mathbb{R}$. 

\subsection{Consumer Preference and Utility Function}
Consumer $n$ derives intrinsic benefit from consuming $x_n$ units via the concave function \cite{samadi2012advanced}:
\begin{equation*}
u_{\mathrm{C},n}(\omega_n, x_n) = \omega_n x_n - \frac{\alpha_n}{2} x_n^2,
\label{eq:consumer_intrinsic_benefit}
\end{equation*}
where $\alpha_n>0$ is a known consumer-specific parameter that determines the saturation sensitivity. Throughout this work, we assume $x_n$ (which is decided by the Aggregator) remains below the saturation point ($x_n \le \frac{\omega_n}{\alpha_n}$) (see, e.g., \cite{guo2024ahigh-efficiency}), as consumption beyond this level yields negative marginal utility and provides no additional benefit to the consumer. Given an ex-ante fixed tariff $p$ by the Aggregator, Consumer $n$ net payoff is:
\begin{equation}
u_{\mathrm{S},n}(\omega_n, x_n)\triangleq u_{\mathrm{C},n}(\omega_n, x_n)-p\,x_n.
\label{eq:consumer_net_utility}
\end{equation}
\subsection{Aggregator's Objective}
Let $\bm{x} \triangleq (x_1, \dots, x_N)$ and $\bm{\omega} \triangleq (\omega_1, \dots, \omega_N)$ denote the vectors of allocations and consumer types, respectively. Let $X\triangleq\sum_{n\in\mathcal{K}} x_n$ be the total load. The Aggregator, acting as a benevolent (non-profit) social planner, aims to maximize social welfare, defined as the aggregate intrinsic benefit of all consumers minus the operating cost \cite{samadi2012advanced}:
\begin{align}
u_{\mathrm{R}}(\bm{\omega}, \bm{x}) &\triangleq \sum_{n\in\mathcal{K}} u_{\mathrm{C},n}(\omega_n, x_n)-C(X), \label{eq:Aggregator_welfare_utility} \\
\text{where } C(X) &= aX^2+bX+c, \quad a \in \mathbb{R}_{>0}, \, b,c \in \mathbb{R}_{\geq 0}. \label{eq:Aggregator_cost}
\end{align}
The cost function $C(X)$ represents energy procurement costs (reflecting increasing marginal generation costs, e.g., \cite{samadi2012advanced}).

\begin{remark}[The First-Best Benchmark]
We consider a baseline scenario, denoted as Full Communication (FC), where the Aggregator has perfect information of consumer types $\{\omega_n\}_{n=1}^N$. This corresponds to an idealized setting without strategic barriers (e.g., DLC), yielding the first-best social optimum.
\end{remark}

\begin{remark}[Price as an Internal Transfer]
The Aggregator's goal of maximizing social welfare in~\eqref{eq:Aggregator_welfare_utility} is independent of $p$. We assume the Aggregator is the sole provider for these consumers; thus, for a benevolent planner, the electricity bill is an internal transfer. When aggregating total surplus, payment flows cancel out as in \cite{samadi2012advanced}. The Aggregator acts as a benevolent system planner, not a profit-maximizer. In practice, our result guides regulators or upstream providers on tuning price parameters so the resulting DR program is as efficient as possible.
\end{remark}

\subsection{Game Formulation}
The interaction is modeled as a one-shot multi-sender cheap talk game as follows:
\begin{enumerate}[wide]
    \item \textbf{Tariff Setting (ex ante):} The Aggregator (acting as a regulated utility) establishes a fixed uniform price $p$ for the upcoming time slot (e.g., a day-ahead or monthly fixed tariff). This parameter subsequently becomes common knowledge.
    
    \item \textbf{Type Realization:} Nature independently draws each consumer's private type $\omega_n \sim \psi_n$.
    
    \item \textbf{Strategic Signaling:} Consumer $n$, observing only their own $\omega_n$, sends a message $m_n\in \mathbb{M}$ to the Aggregator, where $\mathbb{M}$ is a sufficiently rich message space\footnote{In practice, $\mathbb{M}$ can be a finite menu (e.g., a few “comfort/urgency” levels, or flexibility blocks) offered by a program user interface.}.
    
    \item \textbf{Dispatch Decision:} The Aggregator observes the message vector $\bm{m}=(m_1,\dots,m_N)$ (but not $\bm{\omega}$) and chooses a dispatch vector $\bm{x}\in\mathbb{R}^N_{\geq 0}$.
\end{enumerate}
\begin{remark}[Price as a Design Lever]
The price $p$ acts as a design lever. Rather than dynamically clearing the market, $p$ is set in advance to steer the subsequent game toward a more informative equilibrium. This separation of timescales (setting the tariff prior to the realization of private types and real-time allocation) aligns with the operational hierarchy of power markets, where regulated tariffs or reserve capacities are determined well in advance of real-time load balancing.
\end{remark}
We restrict our attention to deterministic strategies. A sender strategy for Consumer $n$ is a function $\sigma_n: \Omega_n \to \mathbb{M}$ that maps private types to messages. The receiver strategy is a dispatch rule $\bm{x}: \mathbb{M}^N \to \mathbb{R}^N_{\geq 0}$ that maps the observed message profile $\bm{m} \triangleq (m_1, \dots, m_N)$ to a vector of consumption allocations. Let $\pi(\cdot \mid \bm{m})$ denote the Aggregator's posterior belief system (the joint probability density over $\bm{\omega}$) given the message vector $\bm{m}$. The solution concept is a Perfect Bayesian Equilibrium (PBE), defined as follows:
\begin{definition}[Perfect Bayesian Equilibrium]
\label{def:pbe}
 A strategy profile $\big(\{\sigma_n^\ast\}_{n \in \mathcal{K}}, \bm{x}^\ast(\cdot)\big)$ and a posterior belief system $\pi^\ast$ constitute a PBE if they satisfy:

1) \textbf{Sequential Rationality (Consumers):} Consumer $n$ chooses a message $m_n$ to maximize their expected payoff, given their private type $\omega_n$ and assuming other consumers follow their equilibrium strategies:
 \begin{equation}
 \sigma_n^\ast(\omega_n) \in \arg\max_{m_n \in \mathbb{M}} \mathbb{E}_{\bm{\omega}_{-n}}\left[ u_{\mathrm{S},n} \big(\omega_n, x_n^\ast(m_n, \bm{m}_{-n}) \big) \right], \label{eq:pbe_sender}
 \end{equation}
 where the expectation is taken over the types of other consumers $\bm{\omega}_{-n} \triangleq \{\omega_j\}_{j \neq n}$ (drawn from priors $\psi_j$), and $\bm{m}_{-n} \triangleq \{\sigma_j^\ast(\omega_j)\}_{j \neq n}$ denotes the corresponding messages.

2) \textbf{Sequential Rationality (Aggregator):} For any observed message vector $\bm{m}$, the Aggregator chooses $\bm{x}$ to maximize expected social welfare with respect to the posterior beliefs:
\begin{equation}
\bm{x}^\ast(\bm{m}) \in \arg\max_{\bm{x} \in \mathbb{R}_{\ge 0}^N} \mathbb{E}_{\bm{\omega} \sim \pi^\ast(\cdot \mid \bm{m})}\left[u_{\mathrm{R}}(\bm{\omega}, \bm{x})\right]. \label{eq:pbe_receiver}
\end{equation}

3) \textbf{Belief Consistency:} The posterior belief $\pi^\ast(\bm{\omega} \mid \bm{m})$ is derived from the prior distribution $\psi(\bm{\omega})$ and the senders' strategies $\{\sigma_n^\ast\}$ using Bayes' rule, whenever possible.
\end{definition}

\begin{remark}[Credible Communication]
Messages are costless and unverifiable in cheap talk, yet equilibrium ensures credibility, as deviations from the equilibrium strategy $\sigma_n^{\ast}$ yield a lower expected utility given the strategies of others.
\end{remark}

\subsection{Equilibrium Interaction and Strategic Decoupling}
At equilibrium, the Aggregator chooses allocations to maximize expected social welfare given its posterior beliefs, while each consumer selects a messaging strategy to maximize their own expected utility, anticipating the Aggregator's response. The key question is whether the strategic interaction among consumers can be decoupled, allowing the $N$-consumer game to be analyzed as independent single sender problems.

1) \emph{Aggregator's Best Response}.
Since $u_{\mathrm{R}}$ is quadratic in $\bm{x}$ and linear in $\bm{\omega}$, the Aggregator's optimal allocation depends on the posterior only through the vector of conditional means $\widehat{\bm{\omega}} \triangleq (\widehat{\omega}_1, \dots, \widehat{\omega}_N)$, where $\widehat{\omega}_j \triangleq \mathbb{E}[\omega_j \mid \bm{m}]$. Maximizing expected utility $\mathbb{E}[u_{\mathrm{R}}(\bm{\omega}, \bm{x}) \mid \bm{m}]$ yields (see \Cref{app:Aggregator_br_derivation}):
\begin{equation}
  x_{n}^{*}(\widehat{\bm{\omega}})
  = \max \biggl\{ 0,\;
   \frac{\widehat{\omega}_n - b}{\alpha_n}
   - \underbrace{\frac{2a}{\alpha_n}
      \frac{\sum_{j=1}^{N} (\widehat{\omega}_j - b)/\alpha_j}
           {1 + 2a \sum_{j=1}^{N} 1/\alpha_j}}_{\text{strategic coupling}}
   \biggr\}.
   \label{eq:Aggregator_best_response}
\end{equation}
The under-braced term reveals \emph{strategic coupling}: Consumer $n$'s allocation depends on the Aggregator's beliefs $\{\widehat{\omega}_j\}_{j\neq n}$ about other participants. This coupling arises from the convexity of $C(X)$: the marginal cost to serve Consumer $n$ depends on the aggregate load, and hence on others' consumption. Consequently, to optimize their message, each consumer must anticipate others' strategies, creating an intractable fixed-point problem for large populations.

2) \emph{Resolving Strategic Coupling}.
Strategic coupling can be resolved by focusing on the regime of active participation.

\begin{assumption}[Active Participation Regime]
\label{assump:interior_solution}
System parameters $(a, b)$, consumer parameters $\{\alpha_n\}_{n=1}^N$, and type supports $\{\Omega_n\}_{n=1}^N$ are such that for any vector of posterior beliefs $\widehat{\bm{\omega}}$ at equilibrium, the unconstrained optimal dispatch is strictly positive for all consumers, i.e., $x_{n}^*(\widehat{\bm{\omega}}) > 0$ for all $n \in \mathcal{K}$.
\end{assumption}

This assumption restricts the analysis to consumers with sufficiently high valuations relative to the cost of service (specifically, parameter $b$ in \eqref{eq:Aggregator_cost}) to remain active. In large-scale systems, while some low-valuation consumers may inevitably face binding non-negativity constraints ($x_n=0$), these consumers effectively drop out of the strategic interaction as their allocation becomes insensitive to their reports. By focusing on the active sub-population, the $\max\{0,\cdot\}$ operator in \eqref{eq:Aggregator_best_response} can be relaxed, rendering the best-response function affine and the problem tractable. Consequently, the results established serve as an exact characterization for active users and provide tractable insights into the aggregate system behavior.
\begin{theorem}[Strategic Independence for Interior Equilibria]
\label{thm:strategic_independence}
Under Assumption \ref{assump:interior_solution}, the equilibrium strategy for Consumer $n$ in the $N$-consumer cheap talk game is independent of the strategies of other consumers (Consumer $j \neq n$).\footnote{Multi-sender cheap talk games are typically intractable due to strategic complementarities where one sender's incentive to reveal depends on the variance of others' signals \cite{mcgee2013cheap}. However, our quadratic framework implies linear marginal costs, eliminating this coupling (analogous to the additive case in \cite{mcgee2013cheap}), which is crucial for tractability.}
\end{theorem}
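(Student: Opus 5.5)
Under Assumption~\ref{assump:interior_solution}, the $\max\{0,\cdot\}$ in \eqref{eq:Aggregator_best_response} is never binding on the equilibrium path, so the Aggregator's best response is affine in the vector of posterior means. The plan is to write
\begin{equation*}
x_n^*(\widehat{\bm{\omega}}) = \beta_n\,\widehat{\omega}_n + \sum_{j\neq n}\gamma_{nj}\,\widehat{\omega}_j + \delta_n ,
\end{equation*}
where $\beta_n = \frac{1}{\alpha_n}\bigl(1-\frac{2a/\alpha_n}{1+2a\sum_k 1/\alpha_k}\bigr)>0$ and $\gamma_{nj} = -\frac{2a}{\alpha_n\alpha_j(1+2a\sum_k 1/\alpha_k)}$ are constants. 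The term $\delta_n$ collects the $b$-dependent constants.

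Next I will use independence of the priors and of the strategies. Since types are drawn independently and each $\sigma_j^*$ depends only on $\omega_j$, Bayes' rule gives a posterior that factorizes as $\pi^*(\bm{\omega}\mid\bm{m})=\prod_j \pi_j^*(\omega_j\mid m_j)$. Hence $\widehat{\omega}_j=\mathbb{E}[\omega_j\mid m_j]$ depends only on $m_j$. Off-path messages can be assigned beliefs that also depend only on $m_j$, which is without loss. Therefore $x_n^*(m_n,\bm{m}_{-n}) = \beta_n\widehat{\omega}_n(m_n) + Z_{-n}$, where $Z_{-n}\triangleq\sum_{j\neq n}\gamma_{nj}\widehat{\omega}_j(m_j)+\delta_n$ is a random variable independent of $m_n$ and of $\omega_n$.

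I then substitute this into the sender's objective \eqref{eq:pbe_sender}. Since $u_{\mathrm{S},n}(\omega_n,x)=(\omega_n-p)x-\frac{\alpha_n}{2}x^2$, expanding gives
\begin{align*}
\mathbb{E}\bigl[u_{\mathrm{S},n}\bigr] &= (\omega_n-p-\alpha_n\mathbb{E}[Z_{-n}])\,\beta_n\widehat{\omega}_n(m_n) - \frac{\alpha_n}{2}\beta_n^2\widehat{\omega}_n(m_n)^2 \\
&\quad + \text{terms independent of } m_n .
\end{align*}
The cross term involves only $\mathbb{E}[Z_{-n}]$, because the expectation is linear. By the law of iterated expectations $\mathbb{E}[\widehat{\omega}_j]=\mathbb{E}[\omega_j]=\mu_j$, the prior mean, for \emph{any} strategy $\sigma_j^*$. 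So $\mathbb{E}[Z_{-n}]=\sum_{j\neq n}\gamma_{nj}\mu_j+\delta_n$ is a constant fixed by the priors. Consumer $n$'s problem thus reduces to choosing $m_n$ to minimize $\bigl(\beta_n\widehat{\omega}_n(m_n) - y_n(\omega_n)\bigr)^2$, where the ideal point $y_n(\omega_n)=\frac{\omega_n-p-\alpha_n\mathbb{E}[Z_{-n}]}{\alpha_n}$ depends only on $\omega_n$, on parameters, and on priors.

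Combined with the Aggregator's map $m_n\mapsto\widehat{\omega}_n(m_n)$, which depends only on $\sigma_n$, this defines a single-sender cheap-talk game whose equilibrium conditions involve no $\sigma_j$ for $j\neq n$. The joint PBE is then the product of single-sender equilibria, which proves the claim.

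The main obstacle is the justification that the other senders' strategies enter only through $\mathbb{E}[\widehat{\omega}_j]$. This requires the quadratic-in-$x$ structure, so that no variance of $Z_{-n}$ multiplies a term in $m_n$, together with the martingale property of posterior means. A related subtlety is that the interior assumption must hold for all realized $\widehat{\bm{\omega}}$, including candidate deviations, so that the affine form is valid inside the expectation; I will invoke Assumption~\ref{assump:interior_solution} directly for this.
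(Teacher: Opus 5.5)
Your proof is correct and rests on the same idea as the paper's. Consumer $n$'s message enters $x_n^*$ only through an additive own-belief term, so the second moment of the others' component (the paper's variance term) does not depend on the message. The law of iterated expectations then pins $\mathbb{E}[\widehat{\omega}_j]$ to the prior mean. The difference is minor: you complete the square in the full expected payoff rather than working only with the boundary-type indifference condition, which shows the entire best-response problem (not just the cutoffs) is independent of the others' strategies, and you make explicit the posterior factorization that the paper uses implicitly.
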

\begin{proof}
See \Cref{app:them:strategic_independence}.
\end{proof}
This decoupling is made explicit by characterizing each consumer's effective single sender game.
\begin{corollary}[Effective Single Sender Game]
\label{cor:equivalent_subgame}
Under \Cref{assump:interior_solution}, the game faced by Consumer $n$ is strategically equivalent to a single sender cheap talk game where the sender's utility is given by \eqref{eq:consumer_net_utility} and the receiver's effective utility is:
\begin{equation}
u_{R,n}^{\mathrm{eff}}(\omega_n, x_n) \triangleq \omega_n x_n - \left(\frac{\alpha_n}{2} + a_n^{\mathrm{eff}}\right)x_n^2 - b_n^{\mathrm{eff}}x_n,
\label{eq:receiver_eff_util}
\end{equation}
where the effective parameters are constant from the perspective of Consumer $n$:
\begin{align}
a_n^{\mathrm{eff}} &\triangleq \frac{a}{1 + 2a \sum_{j \neq n} (1/\alpha_j)}, \\
b_n^{\mathrm{eff}} &\triangleq b + \frac{2a\left( \sum_{j \neq n} (\mathbb{E}[\omega_j]-b)/\alpha_j \right)}{1+2a\sum_{j\neq n}(1/\alpha_j)}.
\end{align}
\end{corollary}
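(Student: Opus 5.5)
The plan is to reduce Consumer $n$'s problem in the $N$-consumer game to a problem in $m_n$ alone, and then to show that this reduced problem coincides with a single sender cheap talk game whose receiver has the effective utility \eqref{eq:receiver_eff_util}. I will use Theorem~\ref{thm:strategic_independence} together with independence of the priors. Fixing the other senders' equilibrium strategies, the posterior mean $\widehat{\omega}_j$ depends only on $m_j$ for each $j$. Under Assumption~\ref{assump:interior_solution} the $\max\{0,\cdot\}$ in \eqref{eq:Aggregator_best_response} is inactive. Writing $D \triangleq 1+2a\sum_{j=1}^N 1/\alpha_j$ and $D_{-n}\triangleq 1+2a\sum_{j\neq n}1/\alpha_j$, the dispatch is affine and additively separable:
\begin{equation*}
x_n^*=\frac{D_{-n}}{\alpha_n D}(\widehat{\omega}_n-b)-\frac{2a}{\alpha_n D}\sum_{j\neq n}\frac{\widehat{\omega}_j-b}{\alpha_j}\;\triangleq\; A(m_n)+B(\bm{m}_{-n}).
\end{equation*}
The first coefficient uses the identity $1-\frac{2a}{\alpha_n D}=\frac{D_{-n}}{D}$.

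\emph{Step 1 (sender side).} I substitute this into \eqref{eq:pbe_sender}. Since $u_{\mathrm{S},n}$ is quadratic in $x_n$, the expected payoff equals $\omega_n\bar x_n-\frac{\alpha_n}{2}\bar x_n^2-p\bar x_n-\frac{\alpha_n}{2}\mathrm{Var}_{\bm{\omega}_{-n}}[B]$, where $\bar x_n(m_n)\triangleq A(m_n)+\mathbb{E}[B]$. The variance of $B$ does not depend on $m_n$, so it drops out of the argmax. Hence Consumer $n$ behaves exactly as a single sender with utility \eqref{eq:consumer_net_utility} facing the deterministic action $\bar x_n(m_n)$. By the law of iterated expectations, $\mathbb{E}_{\bm{\omega}_{-n}}[\widehat{\omega}_j]=\mathbb{E}[\omega_j]$, which gives $\bar x_n=\frac{D_{-n}}{\alpha_n D}(\widehat{\omega}_n-b)-\frac{2a}{\alpha_n D}T_n$ with $T_n\triangleq\sum_{j\neq n}(\mathbb{E}[\omega_j]-b)/\alpha_j$.

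\emph{Step 2 (receiver side).} I then check that $\bar x_n$ is precisely the best response of a receiver maximizing $\mathbb{E}[u_{R,n}^{\mathrm{eff}}\mid m_n]$. That best response is $(\widehat{\omega}_n-b_n^{\mathrm{eff}})/(\alpha_n+2a_n^{\mathrm{eff}})$. The key simplification is $\alpha_n+2a_n^{\mathrm{eff}}=\alpha_n+2a/D_{-n}=\alpha_n D/D_{-n}$. Substituting $b_n^{\mathrm{eff}}=b+2aT_n/D_{-n}$ and expanding then yields exactly the expression for $\bar x_n$ from Step 1. Belief consistency carries over unchanged, because by independence the posterior over $\omega_n$ is obtained by Bayes' rule from $\psi_n$ and $\sigma_n$ alone. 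Therefore the pair (sender utility \eqref{eq:consumer_net_utility}, receiver utility \eqref{eq:receiver_eff_util}) induces the same message-to-action map and the same sender incentives, so the equilibria coincide. Note also that $a_n^{\mathrm{eff}},b_n^{\mathrm{eff}}$ depend only on priors and known parameters, which justifies calling them constant from Consumer $n$'s perspective.

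\emph{Main obstacle.} The delicate point is Step 1. The true dispatch to Consumer $n$ is random from his perspective, since it depends on others' messages, so the equivalence is not a pointwise identity of actions. I expect to handle this through the mean–variance decomposition: additivity of $x_n^*$ in $m_n$ and $\bm{m}_{-n}$, together with independence of types, makes the variance term message-independent. Both of these facts rely on Assumption~\ref{assump:interior_solution} and Theorem~\ref{thm:strategic_independence}, so I must verify them carefully before relying on the decomposition.
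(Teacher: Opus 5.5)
Your proposal is correct and follows essentially the same route as the paper: the paper's proof computes Consumer $n$'s expected allocation from \eqref{eq:Aggregator_best_response} via the law of iterated expectations and checks that it equals the effective receiver's best response $(\widehat{\omega}_n-b_n^{\mathrm{eff}})/(\alpha_n+2a_n^{\mathrm{eff}})$, which is exactly your Step 2. Your Step 1 (the allocation's variance, $\mathrm{Var}[B]$, does not depend on the message) is the argument the paper gives in the proof of Theorem~\ref{thm:strategic_independence} and uses only implicitly here, so including it, together with your belief-consistency remark, makes the equivalence self-contained rather than changing the approach.
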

\begin{proof}
See Appendix~\ref{app:proof_corollary}.
\end{proof}
This strategic decoupling is a fundamental property of the quadratic utility structure and key to tractability. It reduces the $N$-consumer game to a set of $N$ parallel, independent single sender subgames. This makes the analysis of large-scale systems tractable, allowing us to characterize the full equilibrium by analyzing the one-to-one strategic interaction, which is undertaken in the next section.
%%%%%%%
\section{Analysis of the Single Consumer Communication Subgame}
\label{sec:single_consumer_analysis}
The strategic independence established in \Cref{sec:n_consumer_game} allows the decomposition of the $N$-consumer game into parallel, single-consumer subgames. This section analyzes the equilibrium of this building block: the strategic interaction between Consumer $n$ (Sender) and the effective Aggregator (Receiver). The analysis begins by developing a canonical form for a broad class of quadratic cheap talk games, which simplifies the analysis and clarifies the underlying strategic structure.
\subsection{A Canonical Form for Quadratic Cheap Talk Games}
Consider a general single sender (with type $\omega$), single-receiver (taking an action $x$) cheap talk game with quadratic utilities:
\begin{align}
u_{\mathrm{S}}(\omega, x) &= s_2 x^2 + s_1(\omega) x + s_0(\omega), \label{eq:general_sender_utility} \\
u_{\mathrm{R}}(\omega, x) &= r_2 x^2 + r_1(\omega) x + r_0(\omega), \label{eq:general_receiver_utility}
\end{align}
where $s_2, r_2 < 0$ (strict concavity). The Sender and Receiver ideal actions are $x_\mathrm{S}(\omega) \triangleq \frac{-s_1(\omega)}{2s_2}$ and $x_\mathrm{R}(\omega)\triangleq \frac{-r_1(\omega)}{2r_2}$, respectively. The strategic conflict is captured by the bias function, $\beta(\omega) \triangleq x_\mathrm{S}(\omega) - x_\mathrm{R}(\omega)$. The analysis of this game can be simplified by reparameterizing the type space.

\begin{proposition}[Reduction to a Canonical Form]
\label{prop:game_equivalence}
Assume $s_1(\omega)$ is strictly monotonic. The game defined by utilities in \eqref{eq:general_sender_utility}-\eqref{eq:general_receiver_utility} can be reduced, without loss of generality, to a canonical game with utilities:
$\widetilde{u}_{\mathrm{S}}(\theta, x) = -x^2 + 2 \theta x$, and
$\widetilde{u}_{\mathrm{R}}(\theta, x) = -x^2 + 2 g(\theta)x$, where the auxiliary type $\theta \triangleq x_{\mathrm{S}}(\omega) = -\frac{s_1(\omega)}{2s_2}$ is the Sender's preferred action, and the Receiver's ideal action is $g(\theta) \triangleq x_R\big(\omega(\theta)\big) = \frac{-r_1\left(\omega(\theta)\right)}{2r_2}$.
\end{proposition}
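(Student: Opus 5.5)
The plan is to show that the reparameterization $\theta = x_{\mathrm{S}}(\omega)$, combined with positive affine rescalings of both utilities, preserves every ingredient of the PBE in Definition~\ref{def:pbe}. These ingredients are the sender's ranking of actions for each type, the receiver's ranking of actions given any posterior, and Bayesian consistency. Under this correspondence the equilibria of the two games are in one-to-one correspondence, which is what ``without loss of generality'' means here.

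\textbf{Step 1: bijection of types.} Since $s_2<0$ is a constant and $s_1$ is strictly monotonic, the map $\omega \mapsto \theta = -s_1(\omega)/(2s_2)$ is a strict monotone bijection from $\Omega$ onto its image $\Theta$. Its inverse $\omega(\theta)$ is therefore well defined, and $g(\theta) = x_{\mathrm{R}}(\omega(\theta))$ makes sense. The prior $\psi$ is pushed forward to a prior on $\Theta$. Any sender strategy $\sigma(\omega)$ corresponds to $\tilde\sigma(\theta) = \sigma(\omega(\theta))$, and vice versa. Because the map is a measurable bijection, posteriors computed by Bayes' rule correspond through the same change of variables. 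In particular, conditioning on a message yields the same event, whether it is described in $\omega$ or in $\theta$.

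\textbf{Step 2: sender utility.} Divide $u_{\mathrm{S}}$ by $-s_2>0$, which preserves preferences. This gives $-x^2 - (s_1(\omega)/s_2)x - s_0(\omega)/s_2 = -x^2 + 2\theta x + c(\omega)$. The term $c(\omega)$ does not depend on $x$, so it cannot affect the sender's choice of message. The sender's argmax in \eqref{eq:pbe_sender} is therefore unchanged when $u_{\mathrm{S}}$ is replaced by $\widetilde u_{\mathrm{S}}$. The same holds for expectations over the receiver's response, since these are just weighted sums of utilities with the same positive scaling.

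\textbf{Step 3: receiver utility.} Divide $u_{\mathrm{R}}$ by $-r_2>0$ to get $-x^2 + 2x_{\mathrm{R}}(\omega)x + d(\omega)$. Substituting $\omega=\omega(\theta)$ gives $-x^2+2g(\theta)x + d(\omega(\theta))$. Taking the expectation under any posterior, the expected utility is $-x^2 + 2\mathbb{E}[g(\theta)\mid m]x + \text{const}$. Hence the receiver's best response, $\mathbb{E}[g(\theta)\mid m] = \mathbb{E}[x_{\mathrm{R}}(\omega)\mid m]$, coincides in the two games, and the equilibrium action profiles coincide.

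\textbf{Main obstacle.} The delicate point is the measure-theoretic part of Step 1, namely making sure that belief consistency transfers. I would handle it by noting that for a strictly monotone $\theta(\omega)$ the sets $\sigma^{-1}(m)$ and $\tilde\sigma^{-1}(m)$ are images of each other. Conditional expectations of the same function ($x_{\mathrm{R}}$ versus $g\circ\theta$) then agree almost surely. No density formula is needed, so differentiability of $s_1$ is not required. I would also remark that the bias becomes $\beta = \theta - g(\theta)$.
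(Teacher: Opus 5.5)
Your proposal is correct and follows the same route as the paper. The strictly monotone reparameterization $\theta = -s_1(\omega)/(2s_2)$ is a bijection, and dividing $u_{\mathrm{S}}$ by $-s_2>0$ and $u_{\mathrm{R}}$ by $-r_2>0$ while discarding action-independent terms leaves both players' optimization problems, and hence the equilibria, unchanged; you are just more explicit than the paper's one-line argument in also dropping the receiver's type-dependent constant $-r_0(\omega)/r_2$ and in checking that Bayesian belief consistency carries over under the change of variables.
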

\begin{proof}
The transformation $\theta(\omega)$ is a bijective mapping since $s_1(\omega)$ is monotonic. The Sender's utility $u_{\mathrm{S}}$ can be scaled by the positive constant $\frac{-1}{s_2}$ and shifted by a term independent of the action $x$ without altering the Sender's strategic behavior. This yields $\widetilde{u}_\mathrm{S}$. Similarly, scaling $u_\mathrm{R}$ by $\frac{-1}{r_2}$ yields $\widetilde{u}_\mathrm{R}$. As these are positive affine transformations of the utility functions, the equilibrium structure of the game is preserved.
\end{proof}
This canonical form is useful because it normalizes the Sender's preference so that their type $\theta$ is identical to their optimal/desired action (e.g., equivalently $u_{\mathrm{S}}(\theta, x) = -\left( x - x_{\mathrm{S}}(\theta) \right)^2
$ and $u_\mathrm{R}(\theta, x) = -\left( x - x_{\mathrm{R}}(\theta) \right)^2$). The strategic bias is now captured by the simplified bias function $\beta(\theta) \triangleq \theta - g(\theta)$. Furthermore, the analysis proceeds with the new prior $\phi(\theta)$, which is the probability density of the auxiliary type induced by the transformation.

\begin{remark}[Application to the Consumer-Aggregator Subgame]
Specializing this framework to the effective game for Consumer $n$ from \Cref{cor:equivalent_subgame}. The Consumer's and Aggregator's effective utilities are given by \eqref{eq:consumer_net_utility} and \eqref{eq:receiver_eff_util}, respectively.  In this case, the auxiliary type $\theta_n$, the induced prior $\phi_n$, and the response function $g_n$ are defined as follows:
\begin{align}
\theta_n = x_{\mathrm{S},n}(\omega_n) &= \frac{\omega_n-p}{\alpha_n}, \\
g_n(\theta_n) = x_{\mathrm{R},n}\big(\omega_n(\theta_n)\big) &= \frac{(\alpha_n\theta_n+p)-b_n^{\mathrm{eff}}}{\alpha_n+2a_n^{\mathrm{eff}}}.
\end{align}
The function $g_n(\theta_n)$ is affine in $\theta_n$: $g_n(\theta_n) = \gamma_n\theta_n + \delta_n$, with parameters defined by system properties:
\begin{equation*}
\gamma_n \triangleq \frac{\alpha_n}{\alpha_n+2a_n^{\mathrm{eff}}}\quad \text{and} \quad \delta_n \triangleq \frac{p-b_n^{\mathrm{eff}}}{\alpha_n+2a_n^{\mathrm{eff}}}.
\label{eq:gamma_delta_def}
\end{equation*}
As $\alpha_n, a_n^{\mathrm{eff}}\!>\!0$, it follows that $0 <\!\gamma_n\!< 1$.
Unlike Crawford-Sobel's constant bias \cite{crawford1982strategic}, our model yields a \emph{state-dependent} bias $\beta(\theta_n)\!=\!(1-\gamma_n)\theta_n\!-\!\delta_n$ (see e.g., \cite{deimen2024communication,gordon2010on,melumad1991communication}).
\end{remark}
\subsection{Equilibrium Structure and Characterization }
\label{subsec:eq_characterization}
The canonical form clarifies the underlying strategic incentives. Any equilibrium of this subgame must take the form of a partition of the sender's type space into convex intervals. This is a classic result for cheap talk games satisfying the single-crossing \cite{athey2001single} property, which our quadratic utility model does.

\begin{theorem}[Equilibrium Structure and Characterization]
\label{thm:general_partition_equilibrium}
Any equilibrium of the cheap talk subgame for Consumer $n$ corresponds to a partition of the type space $[\underline{\theta}_n, \overline{\theta}_n]$ into a set of disjoint, convex intervals.
Furthermore, any finite equilibrium of the subgame for Consumer $n$ is characterized by a partition of the transformed type space $[\underline{\theta}_n, \overline{\theta}_n]$ into a finite set of $\kappa_n \ge 1$ convex intervals, defined by boundaries $\underline{\theta}_n=\mu_{n,0} < \mu_{n,1} < \dots < \mu_{n,\kappa_n} = \overline{\theta}_n$. A type $\theta_n \in (\mu_{n,i-1}, \mu_{n,i})$ sends a message inducing the action $x_{n,i}^*$. The boundaries and actions are jointly determined by:
\begin{align}
    \mu_{n,i} &= \frac{x_{n,i}^* + x_{n,i+1}^*}{2},\quad  (\text{Sender Indifference})  \label{eq:arbitrage_cond_theta} \\
    x_{n,i}^* &= \mathbb{E}[g_n(\theta_n) | \theta_n \in (\mu_{n,i-1}, \mu_{n,i})]. \quad (\text{Receiver BR}) \label{eq:best_response_theta}
\end{align}
\end{theorem}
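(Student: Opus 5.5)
By \Cref{cor:equivalent_subgame} and \Cref{prop:game_equivalence}, it suffices to work in the canonical single-sender game for Consumer $n$. There the sender's utility is $\widetilde{u}_{\mathrm{S}}(\theta,x)=-(x-\theta)^2$ up to terms independent of $x$. The receiver's utility is $\widetilde{u}_{\mathrm{R}}(\theta,x)=-(x-g_n(\theta))^2$ up to such terms, where $g_n(\theta)=\gamma_n\theta+\delta_n$ and $0<\gamma_n<1$. I would prove the statement in three steps: monotonicity of induced actions, interval structure of the pooling sets, and the two equilibrium conditions.

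\textbf{Step 1 (monotonicity via single crossing).} Fix an equilibrium and let $A$ be the set of actions induced with positive probability. For any two such actions $x<x'$,
\begin{equation*}
\widetilde{u}_{\mathrm{S}}(\theta,x')-\widetilde{u}_{\mathrm{S}}(\theta,x)=(x'-x)\bigl(2\theta-(x+x')\bigr).
\end{equation*}
This difference is strictly increasing in $\theta$ (single crossing). Hence if type $\theta$ weakly prefers $x'$ to $x$, every $\theta''>\theta$ strictly prefers it. Consequently the set of types inducing a given action $x$ is convex, up to indifference points of measure zero. Indeed, if $\theta_1<\theta_2<\theta_3$ with $\theta_1,\theta_3$ inducing $x$ and $\theta_2$ inducing some $x'\neq x$, then the preference of $\theta_2$ for $x'$ over $x$ would have to propagate either upward to $\theta_3$ (if $x'>x$) or downward to $\theta_1$ (if $x'<x$), a contradiction. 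This gives the first claim: every equilibrium is a partition of $[\underline{\theta}_n,\overline{\theta}_n]$ into disjoint convex intervals, ordered so that induced actions increase with $\theta$.

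\textbf{Step 2 (receiver best response).} Given a message inducing the posterior supported on an interval $(\mu_{n,i-1},\mu_{n,i})$, the receiver maximizes $\mathbb{E}[-(x-g_n(\theta))^2]$. Under the induced prior $\phi_n$, this yields $x_{n,i}^*=\mathbb{E}[g_n(\theta_n)\mid\theta_n\in(\mu_{n,i-1},\mu_{n,i})]$, which is \eqref{eq:best_response_theta}. Because $g_n$ is strictly increasing and $\phi_n>0$, intervals ordered left to right yield strictly increasing actions. In a finite equilibrium with $\kappa_n$ intervals, distinct messages must induce distinct actions after merging any messages that induce the same action, so the intervals can be relabeled to satisfy $\underline{\theta}_n=\mu_{n,0}<\dots<\mu_{n,\kappa_n}=\overline{\theta}_n$.

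\textbf{Step 3 (sender indifference).} At an interior boundary $\mu_{n,i}$, types just below it prefer $x_{n,i}^*$ and types just above it prefer $x_{n,i+1}^*$. By continuity of $\widetilde{u}_{\mathrm{S}}$ in $\theta$, the boundary type is indifferent: $(\mu_{n,i}-x_{n,i}^*)^2=(\mu_{n,i}-x_{n,i+1}^*)^2$. Since $x_{n,i}^*<x_{n,i+1}^*$, this gives $\mu_{n,i}=(x_{n,i}^*+x_{n,i+1}^*)/2$, which is \eqref{eq:arbitrage_cond_theta}. Conversely, these two conditions suffice for an equilibrium: with quadratic loss, a type in interval $i$ is closer to $x_{n,i}^*$ than to any other induced action, because the midpoints order the actions. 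Off-path messages can be assigned beliefs that induce some on-path action, so no profitable deviations exist.

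The main obstacle I expect is the rigor of Step 1: showing that messages inducing the same action can be merged, and handling measure-zero indifferent types and off-path beliefs cleanly. The approach is to argue in terms of induced actions rather than messages, which makes the partition well defined almost everywhere. The rest is a routine specialization of Crawford--Sobel, valid here because the quadratic structure is preserved under \Cref{prop:game_equivalence}.
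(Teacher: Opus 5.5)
Your proposal is correct and follows essentially the same route as the paper: single crossing of $\widetilde{u}_{\mathrm{S},n}$ gives the interval structure, the receiver's first-order condition gives \eqref{eq:best_response_theta}, and indifference of the boundary type gives \eqref{eq:arbitrage_cond_theta}. You make explicit the convexity step that the paper delegates to \cite{gordon2010on}, and your three-type argument in fact gives exact convexity, so the measure-zero caveat is unnecessary. You also add the sufficiency direction, which the paper does not state.
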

\begin{proof}
See \Cref{app:thm:general_partition_equilibrium}.
\end{proof}
A non-informative equilibrium ($\kappa_n=1$)\footnote{Since each interval corresponds to a unique message inducing a distinct action, we use the terms ``number of messages'' and ``number of intervals'' interchangeably to refer to $\kappa_n$.} always exists.
The system designer's question is under what conditions an \emph{Informative Equilibrium} ($\kappa_n \ge 2$) exists, and what is the maximum possible number of messages at equilibrium $\kappa_{n,\mathrm{max}}$.\footnote{For a given prior distribution $\phi_n(\cdot)$ and $(\gamma_n, \delta_n)$:

$\kappa_{n,\mathrm{max}}(\gamma_n,\delta_n) \triangleq
\sup\{\kappa\geq 1\,| \,\text{there exists an }
\text{equilibrium with }\kappa\text{ bins} \}$.} To study the existence of informative equilibrium, the following lemma, adapted from \cite{gordon2010on}, is useful.
\begin{lemma}[\cite{gordon2010on}, Thm. 1]
\label{lemma:existence_monotonicity}
If an informative equilibrium with $\kappa_n$ messages exists for Sender $n$, then an informative equilibrium exists for all $\kappa_n'<\kappa_n$. Consequently, if no equilibrium exists for $\kappa_n = 2$, none exists for any $\kappa_n > 2$.
\end{lemma}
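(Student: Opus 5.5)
The plan is the standard forward-difference-equation argument of Crawford--Sobel, adapted to the state-dependent bias as in Gordon (2010). We work in the canonical form of \Cref{prop:game_equivalence}, with $g_n(\theta)=\gamma_n\theta+\delta_n$. Since $0<\gamma_n<1$, $g_n$ is strictly increasing, so the conditional means in \eqref{eq:best_response_theta} are strictly increasing in both interval endpoints. We encode a partial partition by its first two boundaries. Fix $\mu_{n,0}=\underline{\theta}_n$ and a free first boundary $t=\mu_{n,1}$. Conditions \eqref{eq:arbitrage_cond_theta}--\eqref{eq:best_response_theta} then determine the rest of the sequence recursively: given $\mu_{i-1}<\mu_i$, choose $\mu_{i+1}$ so that
\[
\mathbb{E}[g_n\mid(\mu_i,\mu_{i+1})]=2\mu_i-\mathbb{E}[g_n\mid(\mu_{i-1},\mu_i)].
\]
A $\kappa$-partition equilibrium is exactly a choice of $t$ for which this forward solution hits $\mu_\kappa(t)=\overline{\theta}_n$ with all intermediate boundaries in the interior.

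The key steps, in order, are as follows.
\begin{enumerate}
\item Establish that, whenever the forward solution exists and is increasing, each $\mu_i(t)$ is continuous in $t$. This follows from continuity and strict positivity of $\phi_n$ together with strict monotonicity of the conditional mean in the upper endpoint, via the implicit function theorem or an intermediate value argument.
\item Prove the monotonicity condition (M) of Crawford--Sobel: if two forward solutions satisfy $\mu_0=\mu_0'$ and $\mu_1>\mu_1'$, then $\mu_i>\mu_i'$ for all $i$. This is where the structure matters. Because $\beta$ is state dependent, (M) is not automatic. Gordon's Theorem 1 shows that it holds whenever each step of the difference equation is monotone. Here that reduces to showing that $\mathbb{E}[g_n\mid(\mu_{i-1},\mu_i)]$ is nondecreasing in $\mu_{i-1}$, which holds since $g_n$ is increasing, and that the right-hand side $2\mu_i-\mathbb{E}[g_n\mid(\mu_{i-1},\mu_i)]$ increases in $\mu_i$ faster than the left-hand side does as the bound moves. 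Since the lemma is stated as an adaptation of \cite{gordon2010on}, I would verify Gordon's sufficient condition for our affine $g_n$ and then cite the result.
\item Given (M) and continuity, let $\kappa$ be such that a $\kappa$-equilibrium exists with first boundary $t_\kappa$, and take $\kappa'<\kappa$. The forward solution from $t_\kappa$ satisfies $\mu_{\kappa'}(t_\kappa)<\overline{\theta}_n$. As $t$ increases toward $\overline{\theta}_n$, either $\mu_{\kappa'}(t)$ exceeds $\overline{\theta}_n$ or the sequence ceases to exist before step $\kappa'$ because the upper boundary is passed. In either case, continuity and (M) give a $t^*$ with $\mu_{\kappa'}(t^*)=\overline{\theta}_n$.
\item The intermediate boundaries are then strictly increasing. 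The resulting profile is a $\kappa'$-partition equilibrium by \Cref{thm:general_partition_equilibrium}, and the consequence stated in the lemma is the contrapositive for $\kappa'=2$.
\end{enumerate}

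The main obstacle is step 2: establishing (M) with state-dependent bias, and handling the boundary case where the forward sequence fails to exist for intermediate $t$. I would first try a direct induction on $i$. The base case is the assumed ordering $\mu_1>\mu_1'$. For the inductive step, write $\Delta_i=\mu_i-\mu_i'$ and use the mean-value form of the conditional-mean derivatives. Bounding these derivatives relies on $\gamma_n<1$, so that the receiver's response is flatter than the sender's, together with log-concavity-type properties of $\phi_n$ if they are needed; where they are, I would fall back on Gordon's general hypothesis as cited.
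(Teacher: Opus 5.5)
The paper gives no proof of this lemma; it imports it from \cite{gordon2010on}. Your self-contained shooting argument is a reasonable alternative, but as written it has a genuine gap, starting with step 2. Let $y(a,b)\triangleq\mathbb{E}[g_n(\theta_n)\mid\theta_n\in(a,b)]$, which is increasing in both endpoints. Differentiating $y(\mu_i,\mu_{i+1})=2\mu_i-y(\mu_{i-1},\mu_i)$ gives $\partial\mu_{i+1}/\partial\mu_{i-1}=-\partial_1 y(\mu_{i-1},\mu_i)/\partial_2 y(\mu_i,\mu_{i+1})<0$. So the monotonicity in $\mu_{i-1}$ that you invoke makes the step map \emph{decreasing} in its first argument. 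It works against (M), and the induction ``$\mu_i>\mu_i'$'' does not propagate. Condition (M) needs a widening-gap estimate $\Delta_{i+1}-\Delta_i\ge\Delta_i-\Delta_{i-1}$. This holds for the uniform prior, where $\Delta_{i+1}=c\,\Delta_i-\Delta_{i-1}$ with $c=(4-2\gamma_n)/\gamma_n>2$. In general, though, it requires bounds on $\partial_1 y$ and $\partial_2 y$ (log-concavity-type conditions) that fail for, e.g., bimodal priors, and the paper assumes only a continuous, strictly positive prior. Falling back on Gordon does not help either, since the paper cites \cite{gordon2010on} for the no-gap conclusion itself, not for (M).

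More importantly, (M) is not what is missing. What is missing is control of how the forward sequence can die as $t$ increases, and step 3 overlooks two failure modes.
\begin{itemize}
\item \textbf{Undershooting.} Given $\mu_{i-1}<\mu_i$, a boundary $\mu_{i+1}>\mu_i$ exists only if $2\mu_i-y(\mu_{i-1},\mu_i)>g_n(\mu_i)$. When $\beta_n(\mu_i)=\mu_i-g_n(\mu_i)<0$ and the $i$-th interval is short, $y(\mu_{i-1},\mu_i)\approx g_n(\mu_i)>\mu_i$ and this condition fails. The sequence can then die by undershooting, not only by passing $\overline{\theta}_n$. This can happen when $\theta_n^\star>\overline{\theta}_n$, and below $\theta_n^\star$ in the outward-bias regime.
\item \textbf{Squeezing.} In your case ``the sequence ceases to exist before step $\kappa'$'', continuity yields some $\mu_j=\overline{\theta}_n$ with $j<\kappa'$. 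That is a $j$-partition, not a $\kappa'$-partition, unless you rule out boundaries being squeezed into $\overline{\theta}_n$ together.
\end{itemize}

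A regime split fixes both problems without (M).
\begin{itemize}
\item In the outward-bias regime, \Cref{prop:agreement_type} already gives equilibria of every finite size.
\item In the strict-bias regime with $\beta_n>0$ on the support ($\theta_n^\star<\underline{\theta}_n$), any valid forward sequence has $y(\mu_{i-1},\mu_i)<g_n(\mu_i)\le\mu_i$. Hence the next action satisfies $2\mu_i-y(\mu_{i-1},\mu_i)>g_n(\mu_i)+2\beta_n(\mu_i)$, so there is no undershooting. Since that action lies below $g_n(\mu_{i+1})$, you also get the uniform separation $\mu_{i+1}-\mu_i>2\beta_n(\underline{\theta}_n)/\gamma_n$ for $i\ge 1$. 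Separation plus continuity in $t$ forces $\mu_{\kappa'}$ to be the first boundary to reach $\overline{\theta}_n$, and to reach it continuously. This gives $t^\ast$ with no monotonicity condition at all.
\item When $\beta_n<0$ on the support, run the mirror-image recursion downward from $\overline{\theta}_n$.
\end{itemize}
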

\begin{proposition}[Existence of an Informative Equilibrium]
\label{prop:informative_delta}
For $g_n(\theta_n) = \gamma_n\theta_n + \delta_n$ (with $0 < \gamma_n < 1$), a sufficient condition for the existence of an informative equilibrium ($\kappa_n \geq 2$) is
\begin{equation}
(2-\gamma_n)\underline{\theta}_n - \gamma_n\mathbb{E}[\theta_n]
\!< 2\delta_n <\!
(2-\gamma_n)\overline{\theta}_n - \gamma_n\mathbb{E}[\theta_n],
\label{eq:informative_exist_delta}
\end{equation}
where $\mathbb{E}[\theta_n] = \int_{\underline{\theta}_n}^{\overline{\theta}_n} \theta_n \phi_n(\theta_n)\,\mathrm{d}\theta_n$.
\end{proposition}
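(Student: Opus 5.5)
We will prove the claim by constructing a two-interval equilibrium explicitly: we find a boundary $\mu\in(\underline{\theta}_n,\overline{\theta}_n)$ satisfying the conditions of \Cref{thm:general_partition_equilibrium} with $\kappa_n=2$. By \Cref{thm:general_partition_equilibrium}, such an equilibrium consists of a boundary $\mu$ with actions
\[
x_1(\mu)=\mathbb{E}[g_n(\theta_n)\mid\theta_n\in(\underline{\theta}_n,\mu)]=\gamma_n m_-(\mu)+\delta_n,
\]
\[
x_2(\mu)=\gamma_n m_+(\mu)+\delta_n .
\]
Here $m_-(\mu)\triangleq\mathbb{E}[\theta_n\mid\theta_n<\mu]$ and $m_+(\mu)\triangleq\mathbb{E}[\theta_n\mid\theta_n>\mu]$. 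The boundary must also satisfy the indifference condition $\mu=\tfrac{1}{2}(x_1(\mu)+x_2(\mu))$. We therefore define the continuous function
\[
F(\mu)\triangleq 2\mu-\gamma_n\big(m_-(\mu)+m_+(\mu)\big)-2\delta_n
\]
on $(\underline{\theta}_n,\overline{\theta}_n)$. We then look for an interior zero by the intermediate value theorem.

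Continuity of $m_\pm$ on the open interval follows because $\phi_n$ is continuous and strictly positive. The latter holds since $\psi_n$ is, and $\theta_n$ is an affine bijection of $\omega_n$. The key step is to compute the endpoint limits.
\begin{itemize}
\item As $\mu\downarrow\underline{\theta}_n$, the lower cell shrinks, so $m_-(\mu)\to\underline{\theta}_n$ and $m_+(\mu)\to\mathbb{E}[\theta_n]$. Hence $F\to(2-\gamma_n)\underline{\theta}_n-\gamma_n\mathbb{E}[\theta_n]-2\delta_n$.
\item As $\mu\uparrow\overline{\theta}_n$, $m_-\to\mathbb{E}[\theta_n]$ and $m_+\to\overline{\theta}_n$. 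Hence $F\to(2-\gamma_n)\overline{\theta}_n-\gamma_n\mathbb{E}[\theta_n]-2\delta_n$.
\end{itemize}
Condition \eqref{eq:informative_exist_delta} states exactly that the first limit is negative and the second is positive. Extending $F$ continuously to the endpoints by these limits, the intermediate value theorem yields $\mu^\ast$ strictly inside the interval with $F(\mu^\ast)=0$.

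It remains to verify that this $\mu^\ast$ gives a genuine informative equilibrium. The two actions are distinct because $x_2-x_1=\gamma_n(m_+-m_-)>0$, using $\gamma_n>0$ and the fact that both cells have positive mass. Sender optimality holds because $\widetilde{u}_{\mathrm S}=-(x-\theta)^2+\theta^2$. Types below $\mu^\ast=\tfrac{1}{2}(x_1+x_2)$ are closer to $x_1$ and prefer it, while types above prefer $x_2$. Both messages are on-path, so Bayes-consistent beliefs are well defined. Any unused messages can be assigned the beliefs of one of the two cells, so no deviation is profitable.

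The main obstacle I anticipate is the rigorous treatment of the endpoint limits of the conditional means, since the conditioning set has vanishing probability. I would handle it with the bound $\underline{\theta}_n\le m_-(\mu)\le\mu$, which forces $m_-\to\underline{\theta}_n$. For $m_+$, I would use dominated convergence on $\int_\mu^{\overline{\theta}_n}\theta\phi_n/\int_\mu^{\overline{\theta}_n}\phi_n$, and argue symmetrically at the upper end. Once the $\kappa_n=2$ equilibrium exists, \Cref{lemma:existence_monotonicity} is not needed for sufficiency. It would only be used to note that failure at $\kappa_n=2$ precludes larger partitions.
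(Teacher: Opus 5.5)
Your proposal is correct and follows the paper's proof essentially verbatim. Your $F(\mu)$ is just $-2\mathcal{D}(\mu)$ for the paper's indifference-gap function, both arguments apply the intermediate value theorem with the same endpoint limits, and your extra checks (justifying those limits, distinctness of the two actions, sender optimality, off-path beliefs) plus your remark that Lemma~\ref{lemma:existence_monotonicity} is unnecessary for sufficiency are sound refinements the paper leaves implicit.
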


\begin{proof}
See \Cref{app:prop:informative_delta}.
\end{proof}

The price $p$ does more than just enable communication; the structure of this equilibrium is fundamentally controlled by the pre-announced price $p$. The key parameter is the \emph{agreement type} $\theta_n^\star \triangleq  \delta_n/(1-\gamma_n)$: the point of zero bias where the players' interests align ($g_n(\theta_n^\star) = \theta_n^\star$). Mapping this back to the consumer's original type space gives:
 \begin{equation*}
 \omega_n^\star \triangleq \alpha_n\!\left(\frac{p-b_n^{\mathrm{eff}}}{2a_n^{\mathrm{eff}}}\right) + p .
 \label{eq:omega_star_def_condensed}
 \end{equation*}
The location of this agreement type relative to the consumer's type support $\Omega_n=[\underline{\omega}_n, \overline{\omega}_n]$ determines the nature of the communication game as follows.

\begin{proposition}[Equilibrium Regimes]
\label{prop:agreement_type}
The price $p$ determines the potential for information revelation by shifting the agreement type $\omega_n^\star$ relative to the type support $\Omega_n$:
\begin{enumerate}[label=\textbf{\arabic*)}, ref=\textbf{\arabic*)}] 
    \item \textbf{Outward Bias Regime:} \label{prop:agreement_type:item:outward_bias}
    If $\omega_n^\star \in \Omega_n$, the number of messages is unbounded ($\kappa_{n,\mathrm{max}}=\infty$), allowing for arbitrarily precise communication around $\omega_n^\star$.
    
    \item \textbf{Strict Bias Regime:} \label{prop:agreement_type:item:strict_bias}
    If $\omega_n^\star \notin \Omega_n$, the number of messages is strictly bounded ($\kappa_{n,\mathrm{max}} < \infty$), limiting the system to coarse information revelation.
\end{enumerate}
\end{proposition}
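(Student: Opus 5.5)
The plan is to work entirely in the canonical form of \Cref{prop:game_equivalence}, with the affine response $g_n(\theta_n)=\gamma_n\theta_n+\delta_n$, $0<\gamma_n<1$, and the state-dependent bias $\beta(\theta_n)=(1-\gamma_n)\theta_n-\delta_n$.

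\textbf{Step 0: translate the condition to the $\theta$-space.} Since $\theta_n=(\omega_n-p)/\alpha_n$ is strictly increasing, $\omega_n^\star\in\Omega_n$ if and only if $\theta_n^\star=\delta_n/(1-\gamma_n)\in[\underline{\theta}_n,\overline{\theta}_n]$. To check the stated formula for $\omega_n^\star$, note that $1-\gamma_n=2a_n^{\mathrm{eff}}/(\alpha_n+2a_n^{\mathrm{eff}})$, so $\theta_n^\star=(p-b_n^{\mathrm{eff}})/(2a_n^{\mathrm{eff}})$ and $\omega_n^\star=\alpha_n\theta_n^\star+p$. Because $\beta$ is affine with positive slope and vanishes only at $\theta_n^\star$, its sign pattern gives two cases. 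If $\theta_n^\star$ is interior, then $\beta<0$ to the left of $\theta_n^\star$ and $\beta>0$ to the right (outward bias). If $\theta_n^\star$ lies outside the support, $\beta$ has a constant sign on the support.

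\textbf{Case 2 (strict bias), the easy half.} Suppose $\theta_n^\star<\underline{\theta}_n$; the case $\theta_n^\star>\overline{\theta}_n$ is symmetric. Then $\beta(\theta_n)\ge\varepsilon\triangleq\beta(\underline{\theta}_n)>0$ on the whole support. Take any finite equilibrium from \Cref{thm:general_partition_equilibrium}. Since $g_n$ is increasing, the Receiver condition \eqref{eq:best_response_theta} gives $g_n(\mu_{n,i-1})\le x_{n,i}^*\le g_n(\mu_{n,i})$. It follows that
\begin{equation*}
x_{n,i}^*\le \mu_{n,i}-\beta(\mu_{n,i})\le\mu_{n,i}-\varepsilon .
\end{equation*}
Sender indifference \eqref{eq:arbitrage_cond_theta} then yields $x_{n,i+1}^*=2\mu_{n,i}-x_{n,i}^*\ge\mu_{n,i}+\varepsilon$. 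Applying the same bound to the next bin, $x_{n,i+1}^*\le\mu_{n,i+1}-\varepsilon$. Combining the two gives $\mu_{n,i+1}-\mu_{n,i}\ge 2\varepsilon$ for every interior pair of boundaries, and hence
\begin{equation*}
\kappa_n\le(\overline{\theta}_n-\underline{\theta}_n)/(2\varepsilon)+2 .
\end{equation*}
This bound is uniform over equilibria, so $\kappa_{n,\max}<\infty$. An infinite partition would need accumulating boundaries, which this uniform gap forbids, so such partitions are excluded as well.

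\textbf{Case 1 (outward bias), the hard half.} I would show that for every $K$ there is an equilibrium with at least $K$ bins; \Cref{lemma:existence_monotonicity} then fills in all smaller bin counts. The construction is a two-sided shooting argument.
\begin{enumerate}[label=(\roman*)]
\item Fix a central bin $[\mu_-,\mu_+]\ni\theta_n^\star$ of small width $h$.
\item Iterate the difference system \eqref{eq:arbitrage_cond_theta}--\eqref{eq:best_response_theta} outward. To the right, each step takes $(\mu_{i-1},\mu_i)$ and solves for $\mu_{i+1}$ from $2\mu_i=\mathbb{E}[g_n\mid(\mu_{i-1},\mu_i)]+\mathbb{E}[g_n\mid(\mu_i,\mu_{i+1})]$. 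To the left, the analogous recursion runs downward.
\item Because $\beta$ is close to $0$ near $\theta_n^\star$, the bin widths near $\theta_n^\star$ stay of order $h$ for many steps. I would compare with the zero-bias recursion, where widths are roughly constant, using continuity of $\phi_n$ and a Gr\"onwall-type estimate on the increments.
\item Consequently, the number of steps needed before a recursion exits the support tends to $\infty$ as $h\to0$.
\item To hit the endpoints exactly, I would use the monotonicity of the forward solution in the initial condition. This holds for affine $g_n$ with $\gamma_n>0$, in the sense of condition (M) in \cite{gordon2010on}. I would apply an intermediate-value argument in the two parameters $(\mu_-,\mu_+)$, adjusting each side separately, so that the right chain lands on $\overline{\theta}_n$ and the left chain on $\underline{\theta}_n$.
\end{enumerate}

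The main obstacle is the last step. The two endpoint conditions are coupled through the central bin's action, so a naive one-dimensional intermediate-value theorem does not apply. If direct shooting is messy, my fallback is to invoke Gordon's result that a type with zero bias in the interior of the support admits equilibria with arbitrarily many steps, after checking that our affine $g_n$ meets his hypotheses. Alternatively, I would apply a Brouwer fixed point theorem to the map that sends $K$-bin boundary vectors to their best-response boundary vectors. In that case the remaining work is to rule out degenerate (collapsed) bins when $\theta_n^\star$ is interior, using the local smallness of $\beta$.
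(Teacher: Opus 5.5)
Your translation to $\theta$-space matches the paper, and your Case 2 is correct by a different route. The paper handles the strict-bias regime by citing Gordon's theorem that a bias of constant sign bounds the number of steps. You instead derive it directly from the two equilibrium conditions of the partition characterization. The chain $x_{n,i}^*\le g_n(\mu_{n,i})=\mu_{n,i}-\beta(\mu_{n,i})\le\mu_{n,i}-\varepsilon$, combined with indifference and the same bound on the next bin, gives the uniform gap $\mu_{n,i+1}-\mu_{n,i}\ge 2\varepsilon$. Your route is self-contained for any positive prior. It also gives an explicit bound on $\kappa_{n,\max}$ in terms of the distance from $\theta_n^\star$ to the support, and it rules out infinite partitions. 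The paper's citation only buys brevity.

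Case 1, however, is not yet a proof.

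\textbf{Step (v).} This is where all the work lies, and you leave it open. Each chain depends on both ends of the central bin through the central action $\mathbb{E}[g_n(\theta_n)\mid\theta_n\in[\mu_-,\mu_+]]$. For example, raising $\mu_-$ raises that action and so lowers the first boundary of the right chain. Two separate one-dimensional intermediate-value arguments therefore do not combine without a nested continuity argument, which you have not supplied.

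\textbf{Step (iii).} This step is inaccurate. Near $\theta_n^\star$ the second difference of the boundaries is of the same order as the widths; for a uniform prior it equals $\frac{4(1-\gamma_n)}{\gamma_n}(\mu_{n,i}-\theta_n^\star)$. So the widths grow geometrically at an $h$-independent rate rather than staying of order $h$. This still gives $\gtrsim\log(1/h)$ steps, so (iv) survives, but the constant-width comparison is the wrong estimate.

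\textbf{Your first fallback.} This is exactly the paper's proof: Gordon's Thm.~4 (outward bias yields an infinite equilibrium), followed by his Thm.~2 (an infinite equilibrium yields equilibria with every finite number of steps). Note that the statement also covers $\theta_n^\star\in\{\underline{\theta}_n,\overline{\theta}_n\}$, which your ``interior'' wording leaves out.

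If you want Case 1 to be self-contained, your Brouwer fallback closes in a few lines. The ingredient you need is the endpoint sign condition, not the local smallness of $\beta$.

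\textbf{The map.} Work on $S=\{\underline{\theta}_n\le\mu_1\le\dots\le\mu_{\kappa-1}\le\overline{\theta}_n\}$ with $\mu_0=\underline{\theta}_n$ and $\mu_\kappa=\overline{\theta}_n$. Let $x_i=\mathbb{E}[g_n(\theta_n)\mid\theta_n\in[\mu_{i-1},\mu_i]]$, with $x_i=g_n(\mu_i)$ for a degenerate bin, and set $T(\bm{\mu})_i=(x_i+x_{i+1})/2$.

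\textbf{Why Brouwer applies.} Since $g_n$ is increasing, $x_i\in[g_n(\mu_{i-1}),g_n(\mu_i)]$. This gives continuity of $T$ (including across degenerate bins) and shows that $T$ preserves order. Outward bias, i.e.\ $g_n(\underline{\theta}_n)\ge\underline{\theta}_n$ and $g_n(\overline{\theta}_n)\le\overline{\theta}_n$, is exactly what gives $T(S)\subseteq S$.

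\textbf{No collapsed bins.} At a fixed point, take a maximal run of degenerate bins $j+1,\dots,k$ located at a point $c$.
\begin{itemize}
\item If nondegenerate bins $j$ and $k+1$ flank the run, the fixed-point equations at $\mu_j$ and $\mu_k$ give $x_j=2c-g_n(c)=x_{k+1}$. This contradicts $x_j<g_n(c)<x_{k+1}$, which holds because $g_n$ is strictly increasing and the density is positive.
\item If the run starts at $\underline{\theta}_n$, the equation at $\mu_k$ gives $2\underline{\theta}_n=g_n(\underline{\theta}_n)+x_{k+1}>2g_n(\underline{\theta}_n)\ge 2\underline{\theta}_n$, again a contradiction.
\item A run ending at $\overline{\theta}_n$ is excluded symmetrically.
\end{itemize}

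Hence every fixed point has $\kappa$ nondegenerate bins whose boundaries are midpoints of strictly increasing actions. It is therefore a genuine $\kappa$-bin equilibrium, for every $\kappa$, including the endpoint cases. This argument replaces both Gordon citations.
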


\begin{proof}
See \Cref{app:prop:agreement_type}.
\end{proof}
Therefore, the subgame for Consumer $n$ is in the \emph{Outward Bias} regime (with arbitrarily fine intervals around $\omega_n^\star$) if and only if $\omega_n^\star\in \Omega_n$, which yields:
\begin{equation*}
\frac{2a_n^{\mathrm{eff}}\,\underline{\omega}_n+\alpha_n b_n^{\mathrm{eff}}}{\alpha_n+2a_n^{\mathrm{eff}}} \;\le\; p \;\le\; \frac{2a_n^{\mathrm{eff}}\,\overline{\omega}_n+\alpha_n b_n^{\mathrm{eff}}}{\alpha_n+2a_n^{\mathrm{eff}}}.
\label{eq:price_band_outward}
\end{equation*}
This result formalizes the role of the price as a mechanism design lever. By setting $p$, the Aggregator steers the communication outcome toward a more informative one. The Aggregator's objective is to maximize social welfare (represented by his utility function), which is directly tied to the amount of information received from consumers. Since expected utility strictly increases with the informativeness of the equilibrium (i.e., the number of messages) (see, e.g., \cite[Thm. 7]{kazikli2022signaling}), the design objective of maximizing social welfare is operationally equivalent to inducing the most informative communication equilibrium possible.
\subsection{The Optimal Bias-Minimizing Price}
\label{subsec:optimal_price_condensed}
As a designer, the Aggregator's objective is to set the price $p$ to induce the most informative communication equilibrium. A natural approach is to choose $p$ to minimize the ex ante expected bias, $\mathbb{E}[\beta(\theta_n)]$. For our affine bias function, this is achieved by setting the expected bias to zero, which yields the condition $\delta_n^* = (1-\gamma_n)\mathbb{E}[\theta_n]$. Solving for $p$ yields the bias-minimizing price for each subgame:
\begin{equation}
p_n^* = \frac{\alpha_n b_n^{\mathrm{eff}} + 2a_n^{\mathrm{eff}} \mathbb{E}[\omega_n]}{\alpha_n + 2a_n^{\mathrm{eff}}}.
\label{eq:optimal_price_n}
\end{equation}
The following proposition shows the uniformity of this price.
\begin{proposition}[Uniform Optimal Price]
 \label{prop:uniform_optimal_price}
 The price $p_n^*$ that minimizes the expected bias in each subgame $n$ is independent of the index of Consumer $n$, resulting in a single, uniform optimal price $p^*$ for the entire system:
 \begin{equation}
     p^* = \frac{b + 2a \sum_{j=1}^N \frac{\mathbb{E}[\omega_j]}{\alpha_j}}{1 + 2a \sum_{j=1}^N \frac{1}{\alpha_j}}.
     \label{eq:price_minimizing_bias}
 \end{equation}
\end{proposition}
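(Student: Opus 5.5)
The plan is a direct algebraic computation. We take the bias-minimizing price \eqref{eq:optimal_price_n} as given, since it was derived above from the zero-expected-bias condition $\delta_n^*=(1-\gamma_n)\mathbb{E}[\theta_n]$. We then substitute the effective parameters $a_n^{\mathrm{eff}}$ and $b_n^{\mathrm{eff}}$ from \Cref{cor:equivalent_subgame}, and show that every dependence on the index $n$ cancels.

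\textbf{Notation.} Introduce the full-population quantities
\[
S\triangleq\sum_{j=1}^N \frac{1}{\alpha_j}, \qquad M\triangleq\sum_{j=1}^N\frac{\mathbb{E}[\omega_j]}{\alpha_j},
\]
and the leave-one-out versions $S_{-n}\triangleq S-1/\alpha_n$ and $M_{-n}\triangleq M-\mathbb{E}[\omega_n]/\alpha_n$. Also set $D_n\triangleq 1+2aS_{-n}$. In this notation
\[
a_n^{\mathrm{eff}}=\frac{a}{D_n}, \qquad b_n^{\mathrm{eff}}=b+\frac{2a\,(M_{-n}-bS_{-n})}{D_n}.
\]
The key observation is that $D_n$ is the common denominator of both effective parameters. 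So the natural step is to multiply the numerator and the denominator of \eqref{eq:optimal_price_n} by $D_n/\alpha_n>0$.

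\textbf{Denominator.} This gives
\[
D_n+\frac{2a}{\alpha_n}=1+2aS_{-n}+\frac{2a}{\alpha_n}=1+2aS.
\]
This no longer depends on $n$.

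\textbf{Numerator.} This gives
\[
bD_n+2a(M_{-n}-bS_{-n})+\frac{2a\,\mathbb{E}[\omega_n]}{\alpha_n}.
\]
Here the terms $\pm 2abS_{-n}$ cancel, and $M_{-n}+\mathbb{E}[\omega_n]/\alpha_n$ recombines into $M$. The numerator therefore reduces to $b+2aM$.

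\textbf{Conclusion.} Hence $p_n^*=(b+2aM)/(1+2aS)$ for every $n$, which is exactly \eqref{eq:price_minimizing_bias}. This establishes that $p_n^*$ is independent of the index $n$.

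There is no real obstacle here. The only point needing care is the bookkeeping of the leave-one-out sums: the contribution of Consumer $n$ that is missing from $b_n^{\mathrm{eff}}$ and $a_n^{\mathrm{eff}}$ must be exactly restored by the $2a_n^{\mathrm{eff}}\mathbb{E}[\omega_n]$ and $\alpha_n$ terms of \eqref{eq:optimal_price_n}. As a sanity check, one can also verify the result by checking that $p^*$ satisfies $\delta_n=(1-\gamma_n)\mathbb{E}[\theta_n]$ for all $n$ simultaneously.
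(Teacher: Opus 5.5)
Your proposal is correct and follows the paper's own proof: substitute $a_n^{\mathrm{eff}}, b_n^{\mathrm{eff}}$ into \eqref{eq:optimal_price_n} and clear the common denominator $1+2a\sum_{j\neq n}1/\alpha_j$. The paper then cancels the common factor $\alpha_n$, which you simply absorb by multiplying by $D_n/\alpha_n$. Your leave-one-out bookkeeping (the $\pm 2abS_{-n}$ cancellation and $M_{-n}+\mathbb{E}[\omega_n]/\alpha_n=M$) checks out.
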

\begin{proof}
Substituting $a_n^{\mathrm{eff}}$ and $b_n^{\mathrm{eff}}$ from \Cref{cor:equivalent_subgame} into \eqref{eq:optimal_price_n} and multiplying the numerator and denominator by $(1 + 2a \sum_{j \neq n} \alpha_j^{-1})$ reveals that both terms are proportional to $\alpha_n$. Specifically, the expression simplifies to:
\begin{align*}
p_n^*= \frac{\alpha_n \left( b + 2a \sum_{j=1}^N \frac{\mathbb{E}[\omega_j]}{\alpha_j} \right)}{\alpha_n \left( 1 + 2a \sum_{j=1}^N \frac{1}{\alpha_j} \right)}.
\end{align*}
The term $\alpha_n$ cancels, yielding the uniform price $p^*$ in \eqref{eq:price_minimizing_bias}, which is independent of $n$.
\end{proof}
Thus, the Aggregator can set, ex ante, a single, system-wide price that simultaneously creates the best possible communication for all heterogeneous consumers. By setting the price to $p^*$, the Aggregator minimizes the bias (strategic conflict) for all consumers, which would maximize the informativeness of the equilibrium.  While cheap talk games suffer from an equilibrium selection problem (a non-informative equilibrium with $\kappa_n=1$ always exists \cite{crawford1982strategic}), it is standard to assume that players coordinate on the most informative (Pareto-dominant) equilibrium available \cite[Thm. 7]{kazikli2022signaling}.
\begin{remark}[Economic Interpretation of the Optimal Price]
It is worth noting that the bias-minimizing price $p^*$ corresponds exactly to the expected marginal cost of the system at the social optimum: $p^* = \mathbb{E}[C'(X^*)]$, where $X^*$ is the welfare-maximizing total load under full information\footnote{The social optimum satisfies the first-order condition $\omega_n - \alpha_n x_n = C'(X)$. Solving for $x_n$, aggregating to find $X$, and taking the expectation of the marginal cost $C'(X) = 2aX+b$ yields the expression in \eqref{eq:price_minimizing_bias}.}. This implies that the optimal strategy for maximizing information revelation is simply to align the fixed tariff with the expected efficient market price.
\end{remark}
\subsection{Equilibrium Computation}
\label{subsec:computation}
Having characterized the equilibrium, we now address its computation. The method depends on the nature of the prior distribution $\phi_n(\theta_n)$.\\
1) \textbf{The Uniform Prior Case}. For the specific case of a uniform prior $\phi_n(\theta_n) \sim \mathcal{U}[\underline{\theta}_n, \overline{\theta}_n]$, the conditional expectation in the Aggregator's BR becomes the midpoint of the signaled interval. Substituting this into the equilibrium conditions of \Cref{thm:general_partition_equilibrium} yields a tractable second-order linear difference equation governing the intervals boundaries $\{\mu_{n,i}\}_{i=0}^{\kappa_n}$:
\begin{equation*}
  \mu_{n,i+1} - \left(\frac{4-2\gamma_n}{\gamma_n}\right)\mu_{n,i} + \mu_{n,i-1} = -\frac{4\delta_n}{\gamma_n}.
  \label{eq:recurrence_mu_condensed}
\end{equation*}
for $i=1, \dots, \kappa_n-1$, with boundary conditions $\mu_{n,0} = \underline{\theta}_n$ and $\mu_{n,\kappa_n} = \overline{\theta}_n$.
This recurrence can be solved in closed form, providing a direct analytical solution for the equilibrium.\\
2) \textbf{General Method: Best Response Dynamic}. For arbitrary prior distributions, the equilibrium is computed numerically using a Best Response Dynamic (BRD), a strategic analogue of the Lloyd-Max algorithm \cite{lloyd1982least}, as in~\Cref{alg:brd_computation}. Specifically, the Sender’s indifference condition \eqref{eq:arbitrage_cond_theta} is equivalent to the `Nearest Neighbor' quantization rule, while the Receiver’s best response \eqref{eq:best_response_theta} is the `Centroid' calculation, modified by the strategic bias. The convergence of this dynamic is a central question. For the constant-bias case ($\gamma_n=1$), convergence of the BRD to equilibrium has been formally proven for strictly log-concave priors \cite[Thm. 8]{kazikli2022signaling}. While a formal proof for the state-dependent bias case (as in our model) remains an open extension, the structural similarity of the operators and the established convergence in the constant-bias setting provide a theoretical support for its use as a computational method to get to the equilibrium, which is a fixed point of the mutual best responses, motivating the use of BRD. Convergence is tested empirically in \Cref{sec:simulations}. Furthermore, to determine the maximum number of messages at equilibrium, we use \Cref{alg:kappa_max_computation}. This procedure first checks for the outward bias regime ($\kappa_{n,\mathrm{max}}=\infty$); otherwise, it iteratively increments $\kappa$ until the BRD fails to find a valid partition with strictly ordered boundaries.

\begin{algorithm}[t]
\small 
\begin{algorithmic}[1]
\Require $\kappa_n$; $[\underline{\theta}_n, \overline{\theta}_n]$;  $\phi_n(\cdot)$;  $g_n(\cdot)$;  $\varepsilon$.
\State Init. partition $\bm{\mu}_n^{(0)}$ s.t. $\underline{\theta}_n = \mu_{n,0}^{(0)} < \dots < \mu_{n,\kappa_n}^{(0)} = \overline{\theta}_n$.
\While{$\|\bm{\mu}_n^{(k)} - \bm{\mu}_n^{(k-1)}\|_\infty > \varepsilon$}
    \State $\bm{x}_n^{(k)} \gets \mathrm{BR}_{\mathrm{R},n}(\bm{\mu}_n^{(k-1)})$ \Comment{Aggregator's BR \eqref{eq:best_response_theta}}
    \State $\bm{\mu}_n^{(k)} \gets\mathrm{BR}_{\mathrm{S},n}(\bm{x}_n^{(k)})$ \Comment{Consumer's BR \eqref{eq:arbitrage_cond_theta}}
\EndWhile
\end{algorithmic}
\caption{BRD for Equilibrium Computation}
\label{alg:brd_computation}
\end{algorithm}

\begin{remark}[On the uniqueness of equilibrium with a fixed number of messages]
A key result in modern cheap talk theory is that for a fixed number of messages, $\kappa_n$, the equilibrium partition is unique if the prior distribution is log-concave\footnote{See, e.g., \cite{bagnoli2005log-concave} for a list of common log-concave distributions.}. This has been established for both the constant-bias model \cite{crawford1982strategic,kazikli2022signaling}, and for state-dependent bias models analogous to ours \cite{deimen2024communication}. Log-concavity induces a monotonic structure on player best responses that prevents multiple solutions. This result is crucial as it ensures that computational methods like \Cref{alg:brd_computation}, upon convergence, find the unique equilibrium for a given $\kappa_n$.
\end{remark}
\subsection{Asymptotic Analysis: Non-atomic Game Limit}
\label{subsec:asymptotic_analysis}
Finally, we analyze the system behavior in the large-population limit, i.e., as $N \to \infty$. This scenario models a large-scale power system where the impact of any single consumer is negligible. In this limit, for any heterogeneous population, the non-negativity constraint on allocations inherent to Assumption~\ref{assump:interior_solution} will inevitably be violated for a subset of low-valuation consumers. To derive insights into the strategic dynamics, we therefore analyze an idealized \emph{unconstrained} system where allocations are not bounded by non-negativity. This benchmark serves two purposes: first, it reveals the asymptotic strategic structure of the interaction; second, its results provide a powerful approximation for the behavior of high-valuation consumers, who are the primary targets of any practical DR program. In the unconstrained setting, we assume that consumer parameters—specifically the mean valuation $\mathbb{E}[\omega]$ and sensitivity $\alpha$—are drawn i.i.d. from a population distribution. Under the Strong Law of Large Numbers, the effective subgame parameters from \Cref{cor:equivalent_subgame} converge to deterministic limits: $a_n^{\mathrm{eff}} \to 0 \quad \text{and} \quad b_n^{\mathrm{eff}} \to \frac{\mathbb{E}[\omega/\alpha]}{\mathbb{E}[1/\alpha]} \triangleq b_\infty^{\mathrm{eff}}$. This convergence has a structural implication. The state-dependent component of the bias vanishes as the slope parameter $\gamma_n$ converges to unity:
$\lim_{N \to \infty} \gamma_n = \lim_{N \to \infty} \frac{\alpha_n}{\alpha_n + 2a_n^{\mathrm{eff}}} = 1$. Consequently, the strategic subgame for every consumer converges from our affine-bias model to the constant-bias model of \cite{crawford1982strategic}.

\begin{proposition}[Optimal Price in the Unconstrained Case]
\label{prop:asymptotic_price}
In the limit of a large consumer population, the price $p^*$ that minimizes the asymptotic bias in the unconstrained system is $p^*_\infty = \frac{\mathbb{E}[\omega/\alpha]}{\mathbb{E}[1/\alpha]}$. If consumer valuations $\omega$ and sensitivities $\alpha$ are independent random variables across the population, this simplifies to the average valuation: $p^*_{\infty} = \mathbb{E}[\omega]$.
\end{proposition}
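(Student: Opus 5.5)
Two routes lead to the statement, and I would use the second to confirm the first. \emph{Route 1} takes the limit in the closed form of \Cref{prop:uniform_optimal_price}. Dividing numerator and denominator of \eqref{eq:price_minimizing_bias} by $2aN$ gives
\begin{equation*}
p^* = \frac{\frac{b}{2aN} + \frac{1}{N}\sum_{j=1}^N \frac{\mathbb{E}[\omega_j]}{\alpha_j}}{\frac{1}{2aN} + \frac{1}{N}\sum_{j=1}^N \frac{1}{\alpha_j}}.
\end{equation*}
The consumer parameters $(\mathbb{E}[\omega_j],\alpha_j)$ are i.i.d.\ draws from the population distribution. The Strong Law of Large Numbers then makes the two empirical averages converge almost surely to $\mathbb{E}[\omega/\alpha]$ and $\mathbb{E}[1/\alpha]$, and the terms $b/(2aN)$ and $1/(2aN)$ vanish. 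For this I assume these moments are finite, which holds if $\alpha$ is bounded away from $0$ and the supports are bounded. Since $\mathbb{E}[1/\alpha]>0$, the continuous mapping theorem gives $p^*\to \mathbb{E}[\omega/\alpha]/\mathbb{E}[1/\alpha]$.

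\emph{Route 2} works directly in the limiting subgame, which is what the phrase ``minimizes the asymptotic bias'' refers to. As established just before the proposition, $a_n^{\mathrm{eff}}\to 0$, $b_n^{\mathrm{eff}}\to b_\infty^{\mathrm{eff}}$ and $\gamma_n\to 1$. The bias $\beta(\theta_n)=(1-\gamma_n)\theta_n-\delta_n$ therefore tends to a constant. Writing $\beta_\infty$ for the limit of the rescaled bias, the receiver's ideal action becomes $(\omega_n-b_\infty^{\mathrm{eff}})/\alpha_n$ and the sender's is $(\omega_n-p)/\alpha_n$. Hence
\begin{equation*}
\beta_\infty=\frac{b_\infty^{\mathrm{eff}}-p}{\alpha_n}.
\end{equation*}
Zeroing this bias gives $p=b_\infty^{\mathrm{eff}}$ for every $n$, so the price is again uniform. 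The constant-bias model of \cite{crawford1982strategic} with zero bias permits full revelation, which justifies calling this the bias-minimizing price.

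The main obstacle is reconciling the two routes. Route 2 works with $\delta_n$ itself, without the division by $1-\gamma_n$, and dividing by $1-\gamma_n\to 0$ in the finite-$N$ formula is delicate. Route 1 is the rigorous one, so I would present it as the proof and Route 2 as the interpretation. I would also check that the definition $b_\infty^{\mathrm{eff}}=\mathbb{E}[\omega/\alpha]/\mathbb{E}[1/\alpha]$ agrees with Route 1. Route 1 gives the ratio with $\mathbb{E}[\omega_j]$ in place of $\omega_j$, but by the tower property applied to the population draw the two ratios are the same limit.

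Finally, suppose $\omega$ and $\alpha$ are independent across the population. Then $\mathbb{E}[\omega/\alpha]=\mathbb{E}[\omega]\,\mathbb{E}[1/\alpha]$, and cancelling the common factor gives $p^*_\infty=\mathbb{E}[\omega]$.
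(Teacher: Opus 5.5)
Your proposal is correct. Your Route 2 is essentially the paper's own proof. The paper notes that with $a_n^{\mathrm{eff}}\to 0$ and $\gamma_n\to 1$, the limiting bias is proportional to $p-b_n^{\mathrm{eff}}$, and it sets $p=b_\infty^{\mathrm{eff}}$. It takes the SLLN limit of $b_n^{\mathrm{eff}}$ as given from the discussion preceding the proposition.

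You differ by making Route 1 the main argument: you take the limit of the finite-$N$ closed form \eqref{eq:price_minimizing_bias} directly.
\begin{itemize}
\item \emph{What Route 1 adds.} It shows that the finite-population bias-minimizing price actually converges to the minimizer of the limiting bias. This limit-of-minimizers versus minimizer-of-the-limit consistency is what \Cref{fig:asymptotic_convergence} compares numerically, and the paper's argument never states it.
\item \emph{Assumptions made explicit.} Route 1 spells out the moment conditions, $\mathbb{E}[1/\alpha]\in(0,\infty)$ and finite $\mathbb{E}[\omega/\alpha]$. The paper leaves these implicit in its claim $b_n^{\mathrm{eff}}\to b_\infty^{\mathrm{eff}}$.
\item \emph{What the paper's route offers.} It is shorter and matches the literal wording ``minimizes the asymptotic bias,'' because it works entirely inside the limiting constant-bias subgame.
\end{itemize}

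The delicacy you flag about dividing by $1-\gamma_n\to 0$ disappears once you clear denominators. The finite-$N$ condition $\delta_n=(1-\gamma_n)\mathbb{E}[\theta_n]$ is equivalent to $p-b_n^{\mathrm{eff}}=2a_n^{\mathrm{eff}}\,\mathbb{E}[\theta_n]$. With bounded supports, the right-hand side vanishes as $a_n^{\mathrm{eff}}\to 0$. This reconciles your two routes without any limiting division and yields \eqref{eq:optimal_price_n} in the process.

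A minor point: no ``rescaling'' is needed in Route 2. The bias $(1-\gamma_n)\theta_n-\delta_n$ already tends to $(b_\infty^{\mathrm{eff}}-p)/\alpha_n$ because $\theta_n$ is bounded.
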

\begin{proof}
The optimal price $p^*_\infty$ is derived by setting the asymptotic bias to zero. The asymptotic bias for any consumer is proportional to the limit of $p - b_n^{\mathrm{eff}}$. Using the established limit for $b_n^{\mathrm{eff}}$, we solve $p^*_\infty - b_\infty^{\mathrm{eff}} = 0$, which directly yields $p^*_\infty = b_\infty^{\mathrm{eff}} = \mathbb{E}[\omega/\alpha]/\mathbb{E}[1/\alpha]$. For the second claim, if $\omega$ and $\alpha$ are independent across the population, then $\mathbb{E}[\omega/\alpha] = \mathbb{E}[\omega]\mathbb{E}[1/\alpha]$. Substituting this into the general formula causes the $\mathbb{E}[1/\alpha]$ terms to cancel, leaving $p^*_\infty = \mathbb{E}[\omega]$.
\end{proof}
\begin{algorithm}[!t]
\small 
\begin{algorithmic}[1]
\Require $\gamma_n, \delta_n, [\underline{\theta}_n, \overline{\theta}_n], \phi_n(\cdot)$
\If{$\delta_n/(1-\gamma_n) \in [\underline{\theta}_n, \overline{\theta}_n]$} \Return $\infty$ \Comment{(Outward Bias)}\EndIf
\State $\kappa \gets 1$
\Loop
    \State $\bm{\mu} \gets$ Algorithm~\ref{alg:brd_computation} with $(\kappa+1)$ messages
    \If{$\bm{\mu}$ is valid ($\mu_0 < \dots < \mu_{\kappa+1}$)} $\kappa \gets \kappa+1$
    \Else\, \Return $\kappa$ \EndIf
\EndLoop
\end{algorithmic}
\caption{Maximal Number of Messages ($\kappa_{n,\mathrm{max}}$)}
\label{alg:kappa_max_computation}
\end{algorithm}
\section{Performance Analysis}
\label{sec:simulations}
In this section, we present numerical simulations to quantify the performance of DR under strategic communication. The objective is threefold: (1) verify the convergence of the proposed BRD algorithm for equilibrium computation; (2) illustrate the core analytical results, particularly the role of the price as a design lever and the relationship between informativeness and welfare; (3) evaluate the behavior of the \emph{Strategic Information Transmission (SIT)} mechanism as the system scales to large populations.
\subsection{Simulation Setup and Performance Metrics}
We model heterogeneous populations of $N$ consumers with varying valuations and sensitivities. Unless otherwise specified, consumer parameters are summarized in \Cref{tab:consumers_params}. The Aggregator's cost parameters are set to $a = 0.051$ \$/kWh$^2$, $b = 7.89$ \$/kWh, and $c = 0$, with variations noted in specific experiments to explore different operating regimes. We evaluate the performance of the SIT mechanism against two fundamental benchmark scenarios: \emph{Full Communication (FC)}; \emph{No Communication (NC)}. The latter corresponds to a non-informative equilibrium ($\kappa_n\!=\!1$ for all $n$) in which the Aggregator relies solely on prior distributions $\{\psi_n\}_{n=1}^N$. This establishes the baseline for uncoordinated behavior.
\begin{table}[!t]
\small
    \renewcommand{\arraystretch}{1}
    \renewcommand{\tabcolsep}{0.4em}
    \centering
    \caption{Consumer Population Parameters for Simulations}
    \label{tab:consumers_params}
    \begin{tabular}{cccc}
        \toprule
        \textbf{Group} &
        \textbf{$[\underline{\omega}_n,\overline{\omega}_n]$\! (\$/kWh)} &
        \textbf{Prior Distribution} &
        \textbf{$\alpha_n$\! (\$/kWh$^2$)}\!\\
        \midrule
         1 & $[10,\, 11]$ & $\mathcal{N}_{[10,11]}(10.5,0.25^2)$ & 0.30 \\
         2 & $[12,\, 13]$ & $\mathcal{U}[12,13]$ & 0.35 \\
         3 & $[14,\, 15]$ & $\mathcal{U}[14,15]$ & 0.45 \\
        \bottomrule
        \multicolumn{4}{l}{\footnotesize $\mathcal{N}_{[a,b]}(\mu,\sigma^2)$ denotes a normal distribution truncated to $[a,b]$.}
    \end{tabular}
\end{table}
For the parameters chosen in \Cref{tab:consumers_params}, we verify that the interior equilibrium condition (\Cref{assump:interior_solution}) holds. The primary performance metric is the \emph{Recovered Welfare ($\mathrm{RW}$)}, defined as the percentage of the maximum achievable welfare gain realized by SIT:
\begin{equation}
\mathrm{RW} = \frac{\mathbb{E}[u_{\mathrm{R}}^{\mathrm{SIT}}] - \mathbb{E}[u_{\mathrm{R}}^{\mathrm{NC}}]}{\mathbb{E}[u_{\mathrm{R}}^{\mathrm{FC}}] - \mathbb{E}[u_{\mathrm{R}}^{\mathrm{NC}}]} \times 100\%,
\label{eq:recovered_welfare_metric}
\end{equation}
where $u_\mathrm{R}^\mathrm{SIT}$, $u_\mathrm{R}^\mathrm{FC}$, and $u_\mathrm{R}^\mathrm{NC}$ denote the Aggregator's expected utility under the SIT mechanism, full communication, and no communication, respectively. The expectations in \eqref{eq:recovered_welfare_metric} are computed using numerical integration over the bounded type supports. This metric normalizes performance on a scale where 0\% corresponds to the non-informative equilibrium and 100\% to the social-optimal equilibrium.
\subsection{Convergence of the Best Response Dynamic (BRD)}
\label{subsec:brd_convergence}
First, we test the convergence of the BRD (Algorithm~\ref{alg:brd_computation}) using homogeneous populations of different sizes from Group 1, with Aggregator parameters $a = 0.1$ and $b = 9.0$. For each population, we compute the equilibrium partition for a representative consumer with a fixed number of messages $\kappa = 20$ at the optimal price $p^*$. \Cref{fig:convergenceBRD} shows the maximum boundary change $\|\bm{\mu}_n^{(k)} - \bm{\mu}_n^{(k-1)}\|_\infty$ over iterations. Convergence to $\varepsilon = 10^{-4}$ is achieved in all cases, requiring 15 iterations for $N=1$ and rising to 140 for $N=40$.  The convergence speed decreases as the population grows (notice that as $N$ increases, the effective parameter $\gamma_n$ approaches $1$, rising from $0.60$ at $N=1$ to $0.976$ at $N=40$). Even for the largest tested system, the computational complexity remains manageable, requiring fewer than 140 iterations to achieve high precision. This validates the algorithm's computational efficiency for practical DR systems and different population sizes.

\begin{figure}[!t]
\centering
\includegraphics[width=0.85\columnwidth,trim=0 0.21mm 0 0mm,clip]{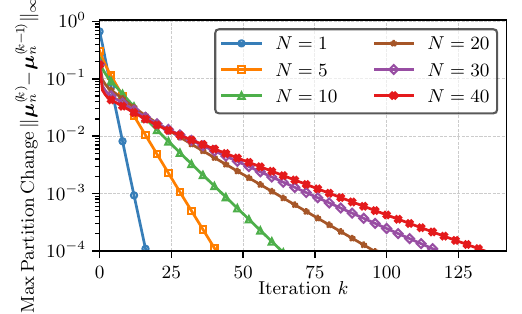}
\caption{Convergence of the BRD algorithm for computing equilibrium partitions. The algorithm exhibits convergence across all population sizes, with the convergence rate gradually decreasing as $N$ increases due to the system approaching the constant-bias regime ($\gamma_n \to 1$).}
\label{fig:convergenceBRD}
\end{figure}
\subsection{Price as a Design Lever}
\label{subsec:price_mechanism_validation}
To demonstrate the dual role of price as a compensation and design lever, we simulate a three-consumer system, one consumer from each group, satisfying \Cref{assump:interior_solution} with optimal price $p^* \approx 9.91$ \$/kWh calculated using \eqref{eq:price_minimizing_bias}. Varying prices around $p^*$ in \Cref{fig:price_mechanism}-a) confirms that a narrow band around $p^*$ creates the \emph{Outward Bias} regime ($\kappa_{n,\mathrm{max}}=\infty$). Deviations force the system into a Strict Bias Regime (limiting messages) or cause a collapse to a non-informative equilibrium. \Cref{fig:price_mechanism}-b) demonstrates that $p^*$ simultaneously minimizes bias and maximizes recovered welfare, achieving 95\% of the first-best outcome (which is under full communication), thus confirming $p^*$ as a key design parameter for incentive alignment.

\begin{figure}[!t]
\centering
\includegraphics[width=0.85\columnwidth,trim=0 1.45mm 0 0mm,clip]{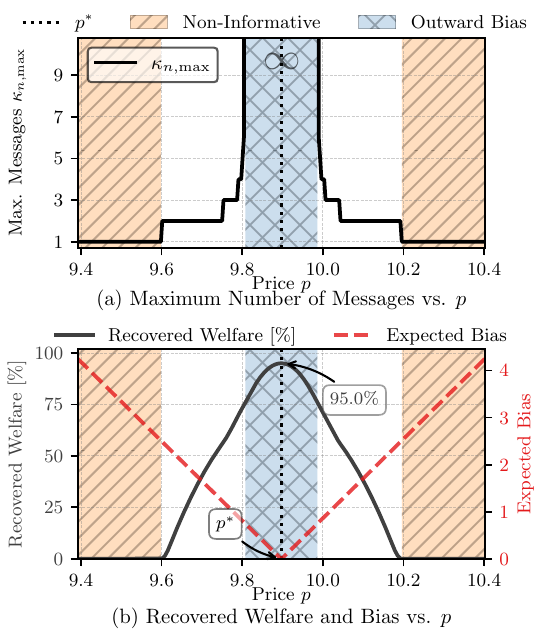}
\caption{The role of price in shaping equilibria and welfare. (a) Maximum number of messages $\kappa_{n,\mathrm{max}}$ as a function of price, revealing three regimes: Non-informative (orange hatched), strict bias (unshaded regions), and outward bias (blue dotted) around $p^*$, which enables $\kappa_{n,\mathrm{max}}=\infty$. (b) Recovered welfare (left axis) and expected bias (right axis) versus price. The optimal price $p^*$ minimizes expected bias and maximizes welfare, achieving 95\% of the first-best outcome.}
\label{fig:price_mechanism}
\end{figure}

\subsection{The Role of Informativeness in Welfare Recovery} \label{subsec:informativeness_welfare}
To isolate the impact of information transmission, we examine recovered welfare versus the number of messages $\kappa \in \{1, \ldots, 10\}$ for the three-consumer system at optimal price $p^*$ and suboptimal price chosen as $p^* + 0.11$ ($1\%$ deviation). \Cref{fig:informativeness_welfare} reveals a monotonic increase, confirming that more informative equilibria dominate. At $p^*$, welfare grows from 0\% ($\kappa=1$) to 95\% as $\kappa \to 10$, asymptotically approaching the full communication benchmark. Rapid initial gains (75\% within three messages) are followed by diminishing returns, as coarse intervals capture the most significant type differences. The suboptimal price limits messages to $\kappa_{\mathrm{max}} = 3$ (72\% recovery), demonstrating that price misalignment constrains informativeness. This justifies equilibrium selection based on welfare dominance and confirms that ex ante price optimization is essential.

\subsection{Scalability Analysis and Asymptotic Validation}
\label{subsec:asymptotic_validation}
We validate the asymptotic optimal price $p_\infty^*$ (\Cref{prop:asymptotic_price}) by simulating heterogeneous populations with varying $N$ (with uniform proportions across each group) and cost curvature $a$. \Cref{fig:asymptotic_convergence} compares welfare under the finite-population price $p^*$ versus $p_\infty^*$. Convergence speed depends on curvature: high curvature ($a=0.3$) yields immediate convergence, while low curvature ($a=0.05$) retains finite-population effects until $N \approx 50$. This behavior aligns with the structure of the optimal price in \eqref{eq:price_minimizing_bias}: the convergence to the asymptote is driven by the summation terms (scaled by $2a$) overtaking the constant terms. When the curvature $a$ is low, the relative weight of the aggregate interaction is weak; consequently, a significantly larger population $N$ is required before the aggregate dynamics dominate the finite-size effects. Critically, for all parameters, welfare recovery exceeds 80\% for large populations, validating the practical utility of the asymptotic approximation $p_\infty^*$ for large-scale DR programs.

These results indicate that voluntary communication, when guided by strategic price-setting, can recover a majority of welfare gains from demand flexibility. The proposed SIT mechanism is a practical, scalable, and computationally efficient alternative to complex incentive-compatible schemes, achieving near-first-best efficiency.
\section{Conclusion}
\label{sec:conclusion}
This paper analyzes strategic communication in demand response programs under objective misalignment. We showed that the intractable multi-sender cheap talk game decouples into independent subgames under active participation conditions. This enabled the derivation of a closed-form, uniform optimal price that minimizes strategic bias across a heterogeneous population. By leveraging price as a design lever to control information quality, simulations demonstrate that voluntary communication can recover up to 95\% of the first-best social optimum, showing that properly designed voluntary mechanisms offer an efficient alternative to complex incentive-compatible schemes without their implementation overhead. The proposed framework opens an avenue for promising extensions: studying inter-temporal settings to capture storage and load-shifting dynamics. Additionally, analyzing constrained-action cheap talk games is crucial to account for physical network limits by relaxing the active participation assumption. Finally, the robustness of these insights could be assessed by investigating the proposed canonical quadratic model as a second-order approximation of general utility functions.
\begin{figure}[!t]
\centering
\includegraphics[width=0.8\linewidth,trim=0 1.75mm 0 0mm,clip]{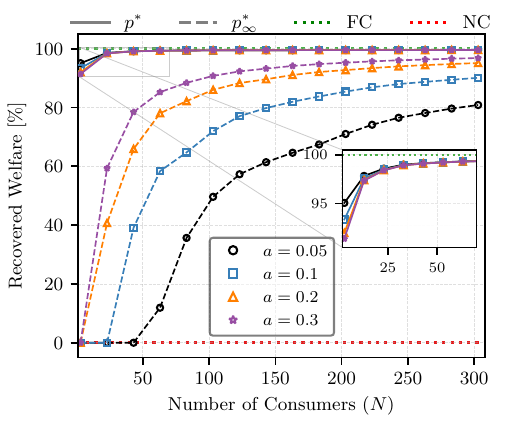}
\caption{Asymptotic price convergence. Dashed: asymptotic $p_\infty^*$; solid: $p^*$. High curvature converges rapidly; low curvature shows effects up to $N \approx 50$. All achieve $>80\%$ welfare for a sufficiently large $N$.}
\label{fig:asymptotic_convergence}
\end{figure}
\appendix
\crefalias{section}{appendix}
\crefalias{subsection}{appendix}
\subsection{Derivation of the Aggregator's Best Response}
\label{app:Aggregator_br_derivation}
The Aggregator chooses $\bm{x}$ to maximize the expected social welfare given the posterior beliefs. Since $u_\mathrm{R}$ is linear in $\bm{\omega}$, the optimization relies on the conditional means $\widehat{\omega}_n \triangleq \mathbb{E}[\omega_n \!\mid\! \bm{m}]$. The first-order condition for maximizing $\mathbb{E}[u_\mathrm{R} \!\mid\! \bm{m}]$ yields $\widehat{\omega}_n-\alpha_nx_n-2aX-b=0$, where $X=\sum_j x_j$. This implies $x_n = (\widehat{\omega}_n - b - 2aX)/\alpha_n$. Summing over all $n$ and solving for $X$, we find $X=(\sum_{j=1}^N \frac{\widehat{\omega}_j - b}{\alpha_j})/(1 + 2a\sum_{j=1}^N \frac{1}{\alpha_j})$. Substituting back into the expression for $x_n$ gives the best response:
\[
x_n^{*}\;=\;\frac{\widehat{\omega}_n - b}{\alpha_n}-\frac{2a}{\alpha_n}\,\frac{\sum_{j=1}^N(\widehat{\omega}_j - b)/\alpha_j}{1 + 2a\sum_{j=1}^N 1/\alpha_j}.
\]
The Hessian of $u_\mathrm{R}$, with elements $H_{nn}\!= \!-\alpha_n\!-\!2a$ and $H_{nj} =\!-2a$ ($n \neq j$), is negative definite for $\alpha_n, a > 0$. Thus, $\mathbb{E}[u_\mathrm{R}\!\mid\!\bm{m}]$ is strictly concave and $\bm{x}^*$ is the unique global maximum.

\subsection{Proof of \Cref{thm:strategic_independence}}
\label{app:them:strategic_independence}
\begin{proof}
Consider Consumer $n$ with type $\omega_n = \tau$ who is indifferent between sending two distinct equilibrium messages, $m$ and $m'$. Let $x_n$ and $x_n'$ denote the allocations resulting from these messages, respectively, which are random from his perspective. The indifference condition requires  $\mathbb{E}_{\bm{\omega}_{-n}} [u_{\mathrm{S},n}(\tau, x_n)] = \mathbb{E}_{\bm{\omega}_{-n}} [u_{\mathrm{S},n}(\tau, x_n')]$, where the expectation is taken over the types of other consumers. Under~\Cref{assump:interior_solution}, we use the unconstrained allocation in \eqref{eq:Aggregator_best_response}. Expanding the indifference condition using the quadratic utility \eqref{eq:consumer_net_utility} and the identity $\mathbb{E}[Z^2] = (\mathbb{E}[Z])^2 + \mathrm{Var}(Z)$ yields:
\begin{flalign}
(\tau-p)(\mathbb{E}[x_n] - \mathbb{E}[x_n']) 
=& \frac{\alpha_n}{2}\Big( (\mathbb{E}[x_n])^2 - (\mathbb{E}[x_n'])^2 \nonumber \\
& + \mathrm{Var}(x_n) - \mathrm{Var}(x_n') \Big).
\label{eq:indiff_with_var}
\end{flalign}
Strategic independence hinges on showing that $\mathrm{Var}(x_n) = \mathrm{Var}(x_n')$. Let $G\!\triangleq\! 1\!+\!2a\sum_{k=1}^{N}\frac{1}{\alpha_k}$. Decomposing $x_{n}^*$ from \eqref{eq:Aggregator_best_response}:
\begin{equation*}
x_{n}^{*} = \underbrace{\left( \frac{1}{\alpha_n} - \frac{2a}{\alpha_n^2 G} \right)(\widehat{\omega}_n-b)}_{\text{Deterministic Component}} - \underbrace{\frac{2a}{\alpha_n G} \sum_{j \neq n} \frac{\widehat{\omega}_j-b}{\alpha_j}}_{\text{Random Component}}.
\end{equation*}
From the perspective of Consumer $n$, all randomness in their allocation arises from the second term, which depends only on others' reports. Since the choice of message by Consumer $n$ affects only $\widehat{\omega}_n$ and thus only the Deterministic Component, the variance of the allocation is independent of their message. Thus, $\mathrm{Var}(x_n) = \mathrm{Var}(x_n')$. With the variance terms canceling in \eqref{eq:indiff_with_var}, and assuming an informative equilibrium where $\mathbb{E}[x_n] \neq \mathbb{E}[x_n']$, the condition simplifies to $\tau-p = \frac{\alpha_n}{2}(\mathbb{E}[x_n] + \mathbb{E}[x_n'])$.
Since $x_n$ is linear in $\widehat{\bm{\omega}}$ and $\mathbb{E}[\widehat{\omega}_j]=\mathbb{E}[\omega_j]$ for all $j \neq n$ (by the Law of Iterated Expectations), the expectations $\mathbb{E}[x_n]$ and $\mathbb{E}[x_n']$ depend only on the aggregate statistics of other consumers' (priors), not their strategies. This establishes strategic independence, as the boundary type $\tau$ for Consumer $n$ is determined independently of the specific equilibrium partitions of other consumers.
\end{proof}
\begin{figure}[!t]
\centering
\includegraphics[width=0.8\linewidth,trim=0 1.65mm 0 0mm,clip]{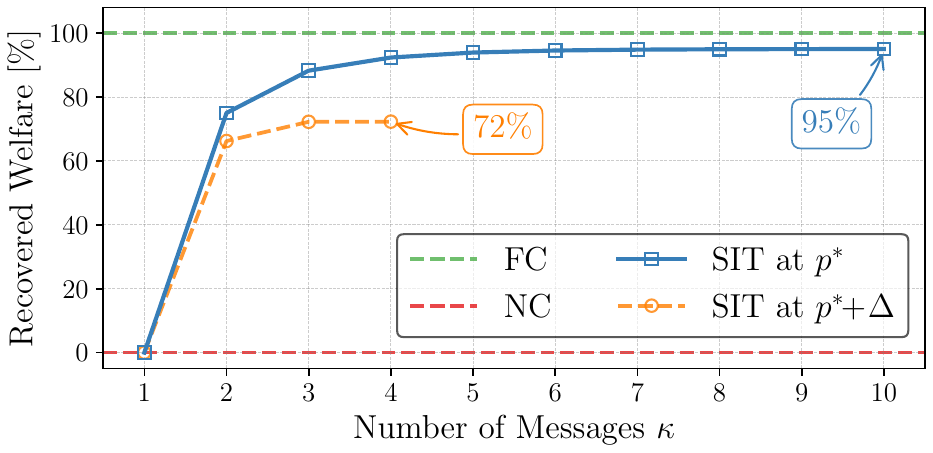}
\caption{Welfare vs. Number of Messages $\kappa$. At $p^*$, welfare increases monotonically to 95\% of the first-best; a $1\%$ price deviation ($p^*+\Delta$) limits $\kappa_{\mathrm{max}}$ to $3$, reducing welfare, though 72\% recovery is still maintained.}
\label{fig:informativeness_welfare}
\end{figure}

\subsection{Proof of \Cref{cor:equivalent_subgame}}
\label{app:proof_corollary}
\begin{proof}
  The proof establishes the equivalence by showing that the Aggregator's expected action in the $\!N$-consumer game, from the perspective of Consumer $n$, is identical to the BR action in the effective single-consumer game. In the effective game, the Aggregator's BR to a belief $\widehat{\omega}_n$ maximizes his utility \eqref{eq:receiver_eff_util}. The first-order condition yields:
  \begin{equation}
  x_n^{\mathrm{eff},*}(\widehat{\omega}_n) = \frac{\widehat{\omega}_n - b_n^{\mathrm{eff}}}{\alpha_n + 2a_n^{\mathrm{eff}}}.
  \label{eq:proof_br_eq}
  \end{equation}
  In the original $N$-consumer game, Consumer $n$ computes the expectation of their allocation from \eqref{eq:Aggregator_best_response} over the reports of others. By the Law of Iterated Expectations, $\mathbb{E}[\widehat{\omega}_j] = \mathbb{E}[\omega_j]$ for $j \neq n$. Let $A_n = \sum_{j \neq n} (\frac{1}{\alpha_j})$ and $G_n = \sum_{j \neq n} (\mathbb{E}[\omega_j]-b)/\alpha_j$. The expected action is:
  \begin{align*}
  \mathbb{E}[x_{n}^{*}] &= \frac{\widehat{\omega}_n-b}{\alpha_n} - \frac{2a}{\alpha_n} \frac{\frac{\widehat{\omega}_n-b}{\alpha_n} + G_n}{1+2a A_n + 2a/\alpha_n} \\
  &= \frac{(\widehat{\omega}_n-b)(1+2aA_n) - 2a G_n}{\alpha_n(1+2aA_n) + 2a}.
  \end{align*}
  Dividing the numerator and denominator by $(1+2aA_n)$ yields:
  \begin{equation}\
  \label{eq:exp-optimal-action-proof}
  \mathbb{E}[x_{n}^{*}] = \frac{\widehat{\omega}_n - \left(b + \frac{2a G_n}{1+2aA_n}\right)}{\alpha_n + \frac{2a}{1+2aA_n}}.
  \end{equation}
  Comparing \eqref{eq:exp-optimal-action-proof} with \eqref{eq:proof_br_eq} shows they are identical under the definitions of $a_n^{\mathrm{eff}}$ and $b_n^{\mathrm{eff}}$ stated in \Cref{cor:equivalent_subgame}.
\end{proof}  

\subsection{Proof of \Cref{thm:general_partition_equilibrium}}
\label{app:thm:general_partition_equilibrium}
\begin{proof}
The Sender's effective utility $\widetilde{u}_{\mathrm{S},n}(\theta_n, x_n) = -x_n^2 + 2\theta_n x_n$ satisfies the Spence-Mirrlees single-crossing property \cite{athey2001single}, as ${\partial^2 \widetilde{u}_{S_n}} /{ \partial x_n \partial \theta_n} = 2 > 0$. This condition ensures that for any two distinct actions $x < x'$, the preference difference $\widetilde{u}_{\mathrm{S},n}(\theta, x') - \widetilde{u}_{\mathrm{S},n}(\theta, x)$ is strictly increasing in $\theta$. Consequently, any PBE must feature a partition of the type space into contiguous intervals \cite{gordon2010on}. The boundaries arise directly from the PBE definition: The indifference condition in \eqref{eq:arbitrage_cond_theta} follows from the requirement that a Sender with the boundary type $\mu_{n,i}$ must be indifferent between the outcomes of reporting interval $i$ (action $x_{n,i}^*$) and the adjacent interval $i+1$ (action $x_{n,i+1}^*$). That is, $\widetilde{u}_{\mathrm{S},n}(\mu_{n,i}, x_{n,i}^*) = \widetilde{u}_{\mathrm{S},n}(\mu_{n,i}, x_{n,i+1}^*)$, which simplifies to \eqref{eq:arbitrage_cond_theta} for $x_{n,i}^* \neq x_{n,i+1}^*$. The Receiver's BR in \eqref{eq:best_response_theta} follows from maximizing its expected utility upon receiving message $m_{n,i}$ (corresponding to $\theta_n \in [\mu_{n,i-1},\mu_{n,i}]$), $\mathbb{E}[\widetilde{u}_{\mathrm{R},n}(\theta_n, x_n) | m_{n,i}]$, whose first-order condition is $-2x_{n,i}^* + 2\mathbb{E}[g_n(\theta_n) | m_{n,i}] = 0$, which yields \eqref{eq:best_response_theta}.
\end{proof}

\subsection{Proof of \Cref{prop:informative_delta}}
\label{app:prop:informative_delta}
\begin{proof}
By \Cref{lemma:existence_monotonicity}, an informative equilibrium exists if and only if a two-interval partition equilibrium exists. Such an equilibrium is defined by a boundary point $\mu_{n,1} \in (\underline{\theta}_n, \overline{\theta}_n)$ that solves the sender's indifference condition from \eqref{eq:arbitrage_cond_theta}, which corresponds to a root of the continuous function $\mathcal{D}(\mu) \triangleq \frac{1}{2}(\mathbb{E}[g_n(\theta_n) | \theta_n \le \mu] + \mathbb{E}[g_n(\theta_n) | \theta_n > \mu]) - \mu$. By the Intermediate Value Theorem, a sufficient condition for a root to exist is that $\mathcal{D}(\mu)$ has opposite signs at the boundaries of the type space $[\underline{\theta}_n, \overline{\theta}_n]$. The condition $\lim_{\mu \to \underline{\theta}_n^+} \mathcal{D}(\mu) > 0$ simplifies to $\frac{1}{2}(g_n(\underline{\theta}_n) + \mathbb{E}[g_n(\theta_n)]) - \underline{\theta}_n > 0$. Substituting $g_n(\theta_n) = \gamma_n\theta_n+\delta_n$ yields the lower bound on $2\delta_n$ in \eqref{eq:informative_exist_delta}. Similarly, the condition $\lim_{\mu \to \overline{\theta}_n^-} \mathcal{D}(\mu) < 0$ yields the upper bound, thus establishing the proposition.
\end{proof}

\subsection{Proof of \Cref{prop:agreement_type}}
\label{app:prop:agreement_type}
\begin{proof}
The proof is a direct application of theorems from \cite{gordon2010on}. First the bias function for the sub-game $n$ is $\beta_n(\theta_n) = (1-\gamma_n)\theta_n - \delta_n$. \textbf{Case \ref{prop:agreement_type:item:outward_bias} Outward Bias Regime} ($\omega_n^\star \in \Omega_n$).
This condition is equivalent to the agreement type $\theta_n^\star$ lying within its support, $\theta_n^\star \in [\underline{\theta}_n, \overline{\theta}_n]$. Since the bias function is strictly increasing and zero at $\theta_n^\star$, it must be that $\beta_n(\underline{\theta}_n) \le 0$ and $\beta_n(\overline{\theta}_n) \ge 0$. This defines an {outward bias} as defined in \cite{gordon2010on}. By \cite[Thm. 4]{gordon2010on}, an outward bias is a sufficient condition for the existence of an equilibrium with an infinite number of quantization intervals. The existence of an infinite equilibrium partition, in turn, implies that an equilibrium exists for any finite number of intervals $\kappa_n \in \mathbb{N}$ \cite[Thm. 2]{gordon2010on}. \textbf{Case \ref{prop:agreement_type:item:strict_bias} Strict Bias Regime} ($\omega_n^\star \notin \Omega_n$).
This condition is equivalent to $\theta_n^\star \notin [\underline{\theta}_n, \overline{\theta}_n]$. Since the monotonic bias function is zero only at $\theta_n^\star$, it must have a constant sign over the entire type space. This defines a {strict bias}. By \cite[Thm. 1]{gordon2010on}, any game with a strict bias admits a finite maximum number of intervals in any equilibrium. This establishes both regimes.
\end{proof}